\newcommand{\NN}{\mathbf{N}}
\newcommand{\QQ}{\mathbf{Q}}
\newcommand{\A}{\mathbf{C}}
\newcommand{\Aco}{\mathbf{C'}}
\newcommand{\C}{\mathbf{S}}
\newcommand{\apart}{\,{\not\,\not\hspace{-0.07em}=}\,}
\newcommand{\Amb}{\mathbf{Amb}}
\newcommand{\amb}{\mathbf{Amb}}
\newcommand{\Fun}{\mathbf{Fun}}
\newcommand{\Left}{\mathbf{Left}}
\newcommand{\Nil}{\mathbf{Nil}}
\newcommand{\Pair}{\mathbf{Pair}}
\newcommand{\Right}{\mathbf{Right}}
\newcommand{\Set}{{\mathbf{\downdownarrows}}}
\newcommand{\Setomega}{\Set^*}
\newcommand{\False}{\mathbf{False}}
\newcommand{\mycase}{\mathbf{case}}
\newcommand{\of}{\mathbf{of}}
\newcommand{\munu}{\Box}
\newcommand{\re}{\mathbf{r}}
\newcommand{\defined}[1]{#1\neq\bot}
\newcommand{\eqdef}{\stackrel{\mathrm{Def}}{=}}
\newcommand{\eqmu}{\stackrel{\mu}{=}}
\newcommand{\eqnu}{\stackrel{\nu}{=}}
\newcommand{\eqrec}{\stackrel{\mathrm{rec}}{=}}
\newcommand{\eqc}[1]{\stackrel{#1}{=}}
\newcommand{\rt}[2]{#2|_{#1}}
\newcommand{\myifthenelse}[3]
    {\mathbf{if}\,#1\,\mathbf{then}\,#2\,\mathbf{else}\,#3}
\newcommand{\strictapp}[2]{#1{\downarrow}#2}
\newcommand{\ire}[2]{#1\,\mathbf{r}\,#2}
\newcommand{\IFP}{\mathrm{IFP}}  
\newcommand{\CFP}{\mathrm{CFP}}  
\newcommand{\TCF}{\mathrm{TCF}} 
\newcommand{\rec}{\mathbf{rec}}
\newcommand{\rea}{\mathbf{R}}
\newcommand{\reah}{\mathbf{H}}
\newcommand{\acc}{\mathbf{Acc}}
\newcommand{\multwo}{\mathbf{mult}_2}
\newcommand{\appr}{\mathbf{approx}}
\newcommand{\reals}{\iota}
\newcommand{\creals}{F}
\newcommand{\one}{\mathbf{1}}
\newcommand{\ftyp}[2]{#1 \Rightarrow #2}
\newcommand{\tfix}[2]{\mathbf{fix}\,#1\,.\,#2}
\newcommand{\mycomment}[1]{}
\newcommand{\caseof}[1]{\mathbf{case}\, #1\, \mathbf{of}\, } 
\newcommand{\Am}{\mathbf{A}}
\newcommand{\less}{\prec}
\newcommand{\pat}{\mathbf{Path}}
\newcommand{\BT}{\mathbf{BT}}
\newcommand{\nsfmat}{F_{\mathrm{ns}}^{n\times n}}
\newcommand{\fmat}{F_{\mathrm{ns}}^{n\times n}}
\newcommand{\mnsfmat}{F_{(\mathrm{ns})}^{n\times n}}
\newcommand{\In}{\mathbf{In}}
\begin{document}

\chapter[Concurrent Gaussian elimination]{Concurrent Gaussian elimination}

\author[U.~Berger and M.~Seisenberger]{Ulrich Berger and Monika Seisenberger}

\address{Department of Computer Science, Swansea University Bay Campus,\\ 
Fabian Way, Crymlyn Burrows, Swansea SA1 8EN, UK\\
\{u.berger,m.seisenberger\}@swansea.ac.uk}

\author[D.~Spreen]{Dieter Spreen}
\address{Department of Mathematics, University of Siegen, \\
57068 Siegen, Germany\\
spreen@math.uni-siegen.de}

\author[H.~Tsuiki]{Hideki Tsuiki}
\address{Graduate School of Human and Environmental Studies,
  Kyoto University,\\ 
Kyoto, Japan\\
tsuiki@i.h.kyoto-u.ac.jp}

\author[U.~Berger, M.~Seisenberger, D.~Spreen, H.~Tsuiki]{}

\begin{abstract}
%


Working in a semi-constructive logical system that supports the extraction of
concurrent programs, we extract a program inverting non-singular 
real valued matrices from a constructive proof based on Gaussian elimination. 
Concurrency is used for efficient pivoting, that is, 
for finding an entry that is apart from zero
in a non-null vector of real numbers.

\end{abstract}


\body




\section{Introduction}
\label{sec-introduction}

A salient feature of constructive 
mathematics\cite{BishopBridges86,Bridges99,BridgesReeves99},
which attracts mathematicians, logicians, philosophers,
and computer scientists, independently of their foundational position,
is the fact that its notions are computationally
meaningful and its proofs give rise to concrete realizations of computations, 
that is, programs.
Manifestations of this ability to `extract programs'
can be found, for example, in constructive type theories, 
functional interpretation and theories of realizability. 
Using computer implementations of constructive type theory, such as 
Nuprl~\cite{Constable86}, 
Coq~\cite{CoqProofAssistant} and 
Agda\cite{Agda}, 
or systems supporting realizability, such as 
Isabelle/HOL~\cite{Berghofer03},
and 
Minlog~\cite{BergerMiyamotoSchwichtenbergSeisenberger11}, 
substantial case studies
have been undertaken that demonstrate the usefulness and scalability of
program extraction.

The exploration of the scope and limits of program extraction is an active 
field of research that has led to many interesting results regarding
the partial inclusion of 
classical logic~\cite{Constable91,Parigot92,Berger02} 
and mathematical principles such as various forms of 
choice~\cite{Berardi98,BergerOliva01,Krivine03,Berger04,Seisenberger08} 
and induction
principles~\cite{Berger04,Schuster13,BergerTsuiki21,PowellSchusterWiesnet21}.
The benefit of program extraction is not only that extracted programs come
automatically with formal correctness proofs but also that important 
additional properties of programs concerning computational complexity,
totality and typability can be obtained. 

An important paradigm in modern computing technology, that has
only recently been connected with program extraction, is \emph{concurrency},
that is, the ability to run several `threads' of computations independently
and asynchronously but allowing them to interact in a number of ways.
Concurrency poses big challenges to computer science, since,
on the one hand, it is indispensable for running computer systems efficiently
on a range of different tasks simultaneously, on the other hand the 
(often nondeterministic) behaviour of such systems is very difficult to
control.

In\citep{BergerTsuikiCFP}, a semi-constructive logical system 
($\CFP$, Concurrent Fixed Point Logic) for the extraction of concurrent programs
was introduced and applied to extract a program that converts between two 
different exact representations of real numbers. Concurrency was needed to
avoid non-termination when dealing with potentially partial inputs 
(infinite Gray code~\citep{Tsuiki02}). In the present paper, we apply 
$\CFP$ to another problem
in computable analysis, the inversion of real valued matrices. Here, not
non-termination is the issue, but the potential gain in efficiency through
concurrent computation.  
In both applications the interaction between concurrent threads 
is limited to winning a race: The result of the thread terminating first 
will be selected and the other killed. Nevertheless, controlling
this form of concurrency at the logical level poses serious challenges.

The paper is organized as follows:
In Sec.~\ref{sec-ifp} we introduce the main elements of the 
system~$\IFP$~\citep{BergerTsuiki21} our method of program extraction is based on, 
and in Sec.~\ref{sec-realnumbers} we instantiate $\IFP$ to the setting of 
real numbers. 
In Sec.~\ref{sec-concurrency} we present the system $\CFP$ that extends $\IFP$
by logical operators permitting the extraction of concurrent programs.
To be able to carry out our case study, we  adjust $\CFP$ to allow for
more than two concurrent threads and iterated `monadic' concurrency.
Finally, Sec.~\ref{sec-gauss} presents the proof of matrix inversion through
Gaussian elimination where pivoting, that is selecting a nonzero entry in a 
nonzero vector, is done concurrently.


\section{Intuitionistic fixed point logic}
\label{sec-ifp}
Our basic logical system for program extraction is $\IFP$,
which is based on many-sorted intuitionistic first-order logic with equality and
least and greatest fixed points. 
A detailed definition of its syntax, semantics and proof calculus,
as well as applications to program extraction can be found 
in~\citep{BergerTsuiki21}.
We highlight the most important points.

The \emph{syntax} of $\IFP$ is parametric in a set of sorts 
(which we usually suppress), a set of constants and function symbols, 
and a set of predicate constants (all with fixed arities).
\emph{Formulas} $A,B$, \emph{predicates} $P,Q$, 
and \emph{operators} $\Phi,\Psi$ are defined simultaneously:
%

%
\emph{Formulas} 
are $\bot, \top$ (falsity and truth), $s=t$ where $s,t$ are first-order 
terms of the same sort, $P(\vec t)$ where $\vec t$ is a tuple of first-order terms 
whose length is the arity of the predicate $P$, $A\land B$, $A\lor B$, $A\to B$, and
$\forall x\,A$, $\exists x\,A$ where $x$ is a first-order variable of a fixed sort.

\emph{Predicates} 
are predicate variables, predicate constants, both of fixed arities, 
$\lambda \vec x\,A$ (abstraction) where $\vec x$ is a tuple of first-order variables
and $\mu\,\Phi$, $\nu\,\Phi$. The common arity of the latter two constructs
is the arity of the operator $\Phi$.

\emph{Operators} 
are of the form $\lambda X\,P$ where $X$ is a predicate 
variable of the same arity as $P$ which occurs freely in $P$ only at strictly 
positive positions, that is, not in the left part of an implication.
The arity of $\lambda X\,P$ is the common arity of $X$ and $P$.
%

%
The application, $\Phi(Q)$, of an operator $\Phi = \lambda X\,P$ to a predicate
$Q$ of the same arity is defined as $P[Q/X]$.
A definition $P \eqdef \mu \Phi$ will also be written as $P \eqmu \Phi(P)$. 
The notation $P \eqnu \Phi(P)$ has a similar meaning.
If $\Phi = \lambda X \lambda \vec x\,A$, then we also write
$P(\vec x) \eqmu A[P/X]$ and $P(\vec x) \eqnu A[P/X]$ 
instead of $P \eqdef \mu \Phi$ and $P \eqdef \nu \Phi$.
We furthermore use the set-theoretic notations 
$P \subseteq Q$ for $\forall x\,(P(x)\to Q(x))$ 
and $P \cap Q$ for $\lambda x\,(P(x)\land Q(x))$, 
as well as the bounded quantifiers $\forall x\in P\,A$ and $\exists x\in P\,A$
for $\forall x\,(P(x)\to A)$ and $\exists x\,(P(x)\land A)$.
In Sects.~\ref{sec-concurrency} and~\ref{sec-gauss} we will also use finitely
iterated conjunctions and disjunctions which we define as
\begin{eqnarray*}
\bigvee_{1\le i\le n}A(i) 
     &\eqdef&   A_1\land (A_2\land \ldots (A_n\land\top)\ldots)\\
\bigwedge_{1\le i\le n}A(i) 
    &\eqdef&   A_1\lor (A_2\lor \ldots (A_n\lor\bot)\ldots).
\end{eqnarray*}

The \emph{semantics} of $\IFP$\index{semantics of $\IFP$} has a classical and a 
constructive part.

The \emph{classical semantics} is as usual for classical first-order predicate logic,
that is, sorts are interpreted as sets (called carrier sets), 
function symbols as (set-theoretic) functions, and predicate constants 
as subsets of the corresponding cartesian products of carrier sets.
Operators are interpreted as monotone predicate transformers and
the predicates $\mu\,\Phi$ and $\nu\,\Phi$ as their least and greatest fixed point,
also called inductive and coinductive definitions.
No assumptions of computability or constructivity are made at this point.

The \emph{constructive semantics} consist of a realizability 
interpretation\index{realizability interpretation} of
formulas and (co)inductive predicates. The space of potential realizers
is an effective Scott-domain\index{Scott-domain}
\citep{GierzHofmannKeimelLawsonMisloveScott03}, 
$D$, defined by the recursive domain equation
\begin{equation}
\label{eq-D}
D = (\Nil + \Left(D) + \Right(D) + \Pair(D\times D) 
  + \Fun(D\to D))_\bot 
\end{equation}
This means that $D$ is the disjoint union of the components labelled
by the constructors $\Nil, \Left,\ldots$. $D$ contains
its own function space, $(D \to D)$, which would not be possible, 
set-theoretically, but is a standard construction in domain theory, since
$(D \to D)$ consists of continuous functions only. Continuity refers to
the Scott topology which is generated by a partial order $\sqsubseteq$ on $D$
which has $\bot$ as its least element (representing `undefined').
Thanks to the function space component, $D$ carries the structure of a PCA 
(partial combinatory algebra)\index{partial combinatory algebra} and can 
therefore interpret simple recursive types defined by the grammar
\begin{equation}
\label{eq-types}
\rho, \sigma ::=  \alpha \mid \one  
                         \mid \rho \times \sigma
                         \mid \rho + \sigma
                         \mid \ftyp{\rho}{\sigma}
                         \mid \tfix{\alpha}{\rho}
\end{equation}
where $\alpha$ ranges over type variables and in $\tfix{\alpha}{\rho}$,
$\rho$ must be strictly positive in $\alpha$. Hence, every type $\rho$
denotes a subdomain $D(\rho)$.
To define the semantics of a formula $A$ we first define a type
$\tau(A)$\index{type of a formula}
that represents the `propositional skeleton' of $A$.
Then the semantics of $A$ is defined as a set $\rea(A) \subseteq D(\tau(A))$ 
of realizers of $A$, essentially following Kleene~\citep{Kleene45} and 
Kreisel~\citep{Kreisel59} but, if possible, simplifying realizers in order to avoid
junk in extracted programs. 
We will usually write `$\ire{a}{A}$' (`$a$ realizes $A$') 
for `$a\in\rea(A)$' and omit the type information about $a$.
For the simplification of realizers, \emph{Harrop formulas} play a special role.
These are formulas that do not contain disjunctions or free 
predicate variables at strictly positive positions.
For example, $\bot$, equations and formulas of the form every 
$P(\vec t)$ where $P$ is a predicate constant are Harrop,
and so is every negated formula since we view negation as 
implication to falsity.
The realizability interpretation of an implication $A\to B$  where
$A$ and $B$ are both non-Harrop is, as expected,
\[ \ire{c}{(A\to B)} \eqdef 
   \forall a\,(\ire{a}{A} \to\ire{(c\,a)}{B}) \] 
while if $A$ is Harrop it is
\[ \ire{b}{(A\to B)} \eqdef 
      \reah(A) \to\ire{b}{B}\,. \] 
Here $\reah(A)$ means $\exists a\,(\ire{a}{A})$ or, equivalently
$\ire{\Nil}{A}$ since, for Harrop formulas, $\Nil$ (an element of $D$ signalling 
trivial computational content) is the only possible realizer.
An even more restricted class of formulas are \emph{non-computational (nc)} formulas 
which do not contain free predicate variables or disjunctions at all. 
These formulas coincide with their realizability interpretation, that is,
$\reah(A)$ is identical to $A$ for nc formulas $A$.
Since first-order variables may range over abstract objects without computable 
structure, quantifiers have to be interpreted uniformly, that is,
\begin{eqnarray*} 
\ire{a}{\forall x\,A(x)} &\eqdef& \forall x\,(\ire{a}{A(x)})\\
\ire{a}{\exists x\,A(x)} &\eqdef& \exists x\,(\ire{a}{A(x)})\,.
\end{eqnarray*}
Hence, the realizer of a universal formula does not depend on the quantified variable,
and the realizer of an existential formula does not carry a witness. 

The realizability interpretation of a least or greatest fixed point is the 
least or greatest fixed point of the realizability interpretation of the corresponding
operator. For example, in the presence of a constant $0$ and a successor function
$(+1)$ we can define the natural numbers  as the least predicate containing $0$
and being closed under $(+1)$,
\begin{equation}
\label{eq-nat}
\NN(x) \eqmu x = 0 \lor \exists\, y\, (x=y+1 \land\NN(y)) 
\end{equation}
The realizability interpretation of $\NN$ is
\begin{equation}
\begin{array}{l}
\label{eq-natr}
\ire{n}{\NN(x)}\ \eqmu\ 
        (n=\Left(\Nil)\land x = 0)\ \lor\ \\ 
\hspace*{5.7em}    
        (n=\Right(m)\land\exists y\,(\ire{m}{\NN(y)} \land x = y+1)) 
\end{array}
\end{equation}
Therefore, $\ire{n}{\NN(x)}$ means that $x$ is a natural number and 
$n$ is the unary representation of $x$ where zero and successor are
modelled by $\Left(\Nil)$ and $\Right(\_)$. 
Examples of coinductive definitions will be given in Sect.~\ref{sec-realnumbers}.

The guiding principle for the \emph{proof rules} of $\IFP$ is 
that they must be sound for the classical and the constructive semantics.
This is achieved through the usual rules for intuitionistic logic, and
rules expressing the least resp.~greatest fixed point property of
$\mu\,\Phi$ resp.~$\nu\,\Phi$. For the natural numbers these are equivalent to
the closure under $0$ and successor and the usual induction rule 
that infers, for an arbitrary predicate $P$, 
from $P(0)$ and $\forall x\,(P(x)\to P(x+1))$ that $\NN\subseteq P$, that is,
$\forall x\,(\NN(x)\to P(x))$.

Regarding \emph{axioms} one has a fair amount of flexibility. Any closed formula $A$
that is true in the intended classical interpretation and for which one has a 
program $M$ that realizes it can be taken as an axiom. 
In particular every classically true nc formula can be taken as an axiom. 
More generally, any closed Harrop formula that is equivalent to its realizability 
interpretation, may be taken as an axiom if we accept it as true in the intended 
interpretation. A large class of formulas enjoying this property is the class 
of \emph{admissible Harrop formulas}, which is the least class that contains all 
nc formulas, is closed under conjunction and quantification and under 
implication, $A\to B$ where $B$ is admissibly Harrop and the formula $A$ is equivalent 
to its realizability interpretation (for example, that is the case 
if $A$ is of the form $\NN(t)$).
An important and useful axiom is a generalization of 
\emph{Brouwer's thesis}~\citep{Veldman06} which
can be stated as the following nc formula
\begin{equation}
\label{eq-bt}
\BT\qquad\forall x\,(\neg\pat_{\less}(x) \to \acc_{\less}(x)) \,.
\end{equation}
This refers to some nc relation $\less$ and relates two ways of expressing 
that an element $x$ is wellfounded with respect to $\less$:
The first way, $\neg\pat_{\less}(x)$, says that there is no descending 
path starting with $x$ 
where the existence of a path can be expressed coinductively as
\begin{equation}
\label{eq-path}
\pat_{\less}(x) \eqnu \exists y \less x\, \pat_{\less}(y)\,.
\end{equation}
The second way, $\acc_{\less}(x)$, expresses wellfoundedness inductively
as the accessible part of $\less$:
\begin{equation}
\label{eq-path1}
\acc_{\less}(x) \eqmu \forall y \less x\, \acc_{\less}(y)
\end{equation}
$\BT$ is stronger and more general than Brouwer's original thesis. The precise
relationship is explained in~\citep{BergerTsuiki21}.
As an example we show:
\begin{lemma}
\label{lem-mp}
Brouwer's thesis, (\ref{eq-bt}), implies Markov's Principle, that is, the schema
\begin{equation}
\label{eq-mp}
\forall x\in\NN (A(x)\lor \neg A(x)) \land \neg\neg\exists x\in\NN\,A(x) \to \exists x\in\NN\,A(x)\,.
\end{equation}
\end{lemma}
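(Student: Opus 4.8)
The plan is to instantiate Brouwer's thesis, (\ref{eq-bt}), with an ordering on $\NN$ that encodes the linear search for a witness of $A$. Fix $A$ with $\forall x\in\NN\,(A(x)\lor\neg A(x))$ and assume $\neg\neg\exists x\in\NN\,A(x)$. Using the decidability hypothesis I first replace $\neg A$ by an equivalent non-computational (nc) condition (equivalently, pass to a characteristic function), so that the relation $\less$ given by $m\less n$ iff $m=n+1\land\neg A(n)$ is nc, as required by $\BT$. Intuitively a descending $\less$-path from $0$ is a run of the search that never halts, and $\acc_{\less}(0)$ is the wellfoundedness that licenses the search to succeed.

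Next I would establish $\neg\pat_{\less}(0)$. Unfolding the coinductive definition (\ref{eq-path}) of $\pat_{\less}$ once, the greatest-fixed-point property gives, for every $n$, that $\pat_{\less}(n)$ implies both $\neg A(n)$ and $\pat_{\less}(n+1)$, since the only $\less$-predecessor of $n$ is $n+1$ and it exists exactly when $\neg A(n)$. Assuming $\pat_{\less}(0)$, ordinary induction over $\NN$ (with the predicate $\lambda n\,\pat_{\less}(n)$) then yields $\forall n\in\NN\,\pat_{\less}(n)$, and hence $\forall n\in\NN\,\neg A(n)$. This contradicts $\neg\neg\exists x\in\NN\,A(x)$, which is intuitionistically $\neg\forall x\in\NN\,\neg A(x)$; so $\pat_{\less}(0)\to\bot$, i.e.\ $\neg\pat_{\less}(0)$. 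Applying $\BT$ at $0$ now delivers $\acc_{\less}(0)$.

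Finally I would extract the witness from $\acc_{\less}(0)$ via the induction principle attached to the inductive definition (\ref{eq-path1}) of $\acc_{\less}$. Taking the predicate $P(x)\equiv(\NN(x)\to\exists n\in\NN\,A(n))$, progressiveness reduces to: assuming $\forall y\less x\,P(y)$ and $\NN(x)$, produce $\exists n\in\NN\,A(n)$. This is where decidability is used a second time: if $A(x)$ we are done with witness $x$; if $\neg A(x)$ then $x+1\less x$, and the induction hypothesis at $y=x+1$ (noting $\NN(x+1)$) supplies the witness. Hence $\acc_{\less}\subseteq P$, and instantiating at $0$ with $\NN(0)$ gives $\exists n\in\NN\,A(n)$, which is the conclusion of (\ref{eq-mp}).

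The two induction arguments are routine; the main obstacle is the passage in the second paragraph, where a single coinductive unfolding of $\pat_{\less}$ must be combined with induction over $\NN$ to turn the local fact $\pat_{\less}(n)\to\neg A(n)$ into the global $\forall n\in\NN\,\neg A(n)$, so that the double negation can be discharged. Care is also needed to keep $\less$ non-computational, which is precisely why the decidability hypothesis is exploited twice: once to render $\less$ nc so that $\BT$ applies, and once in the progressiveness step that terminates the search.
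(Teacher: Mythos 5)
Your proposal is correct and takes essentially the same route as the paper's own proof: the same relation $y \less x \eqdef y = x+1 \land \neg A(x)$, the same induction on $\NN$ showing that $\pat_{\less}(0)$ would give $\forall x\in\NN\,\neg A(x)$ (contradicting the double-negated existence, hence $\neg\pat_{\less}(0)$ and, by $\BT$, $\acc_{\less}(0)$), and the same strictly positive induction on $\acc_{\less}$ with $P(x) \eqdef x\in\NN \to \exists y\in\NN\,A(y)$, using decidability for the case split in the progressiveness step. The only difference is your preliminary move of massaging $\neg A$ into an nc condition so that $\less$ meets the nc requirement of $\BT$ --- a genuine subtlety that the paper's proof silently passes over by applying $\BT$ to $\less$ as defined --- so this is a (not fully worked out, but harmless) addition rather than a divergence.
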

\begin{proof}
To prove~(\ref{eq-mp}) set $y \less x \eqdef y = x+1 \land \neg A(x)$. 
Then, clearly, $\neg\neg\exists x\in\NN\,A(x)$ implies $\neg\pat_{\less}(0)$
(if $\pat_{\less}(0)$, then $\forall x\in\NN\,\pat_{\less}(x)$ follows by induction on $x\in\NN$, hence $\forall x\in\NN\neg A(x)$).
Therefore, by (\ref{eq-bt}), $\acc_{\less}(0)$. 
We set $P(x) \eqdef x\in\NN \to \exists y\in\NN\,A(y)$, 
with the goal to prove $\acc_{\less}\subseteq P$, by wellfounded induction 
on $\less$, that is,
strictly positive induction on $\acc_{\less}$. 
Therefore, we assume, as induction hypothesis,  $\forall y \less x\, P(y)$, and
have to show $P(x)$. Hence let $x\in\NN$. Since $A$ is assumed to be decidable, 
we can do case analysis on $A(x)$. If $A(x)$ holds, then we are done. 
Otherwise, $x+1 \less x$ and therefore $P(x+1)$ holds by the induction 
hypothesis. Since $x+1\in\NN$, it follows $\exists y\in\NN\, A(y)$, 
thus completing the inductive proof of $\acc_{\less}\subseteq P$.
Hence $P(0)$ which means $\exists x\in\NN\,A(x)$.
\end{proof}
Since Markov's Principle, and hence $\BT$, is not generally accepted in 
constructive mathematics, we will mark uses of $\BT$ in the following.

\emph{Program extraction} is based on the soundness theorem for the 
constructive semantics:
\begin{theorem}[Soundness]
\label{thm-soundness}
%
From an $\IFP$ proof of a formula $A$ one can extract a closed program 
$M:\tau(A)$ such that $\ire{M}{A}$ is provable.
\end{theorem}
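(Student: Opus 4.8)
The plan is to prove the Soundness Theorem by induction on the structure of the $\IFP$ derivation of $A$. For each proof rule I would exhibit a program construction that combines the realizers coming from the premises into a realizer of the conclusion, and verify that the required realizability statement is provable. The overall strategy is the standard one for realizability soundness results (following Kleene and Kreisel), adapted to the $\IFP$ syntax with its Harrop/non-Harrop distinction and its fixed-point predicates.

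First I would set up the induction by treating the purely logical rules. For the intuitionistic natural-deduction rules this is routine: introduction and elimination for $\land$, $\lor$, $\to$, $\forall$, $\exists$ each correspond to pairing/projection, injection/case analysis, $\lambda$-abstraction/application, and the uniform (witness-free) treatment of quantifiers described in the excerpt. The one subtlety here is that the program built at an implication rule depends on whether the antecedent is Harrop: when $A$ is Harrop the realizer of $A\to B$ is simply a realizer of $B$ (since $\reah(A)$ is a non-computational side condition rather than genuine data), so the extracted term must be dispatched accordingly. I would also need to confirm that for Harrop and non-computational subformulas the extracted terms are the trivial realizer $\Nil$ and that $\reah(A)$ behaves as claimed, so that no junk leaks into the programs.

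The heart of the proof is the treatment of the fixed-point rules, which is where I expect the main obstacle to lie. For the least fixed point $\mu\,\Phi$ one must show that its closure rule is realized by the canonical coercion $\Phi(\mu\Phi)\subseteq\mu\Phi$ and, crucially, that the induction (elimination) rule is realized: from a program witnessing $\Phi(P)\subseteq P$ one must build a program witnessing $\mu\Phi\subseteq P$, and this program is obtained as a \emph{recursion} (the term $\rec$), whose existence rests on the fact that the realizability interpretation of $\mu\Phi$ is itself the least fixed point of the interpreted operator. Dually, for $\nu\,\Phi$ the coinduction rule is realized by a \emph{corecursion}. The delicate points are (i) checking that the operator's strict-positivity condition guarantees monotonicity of its realizability interpretation, so that least and greatest fixed points genuinely exist in the Scott-domain $D$, and (ii) verifying that the recursive terms are well-defined continuous elements of $D$ (using the PCA structure and the solution of the domain equation~(\ref{eq-D})), and that they have the claimed types $\tau(A)$. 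Termination is \emph{not} required, since realizers live in a domain with $\bot$.

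Finally I would discharge the axioms: by the stated proviso, any admissible Harrop axiom is equivalent to its realizability interpretation and any other admitted axiom comes equipped with a program $M$ realizing it, so the base cases of the induction are immediate. Assembling the cases, and checking that the types produced at each step agree with $\tau$ of the respective formula, yields a closed program $M:\tau(A)$ together with a proof of $\ire{M}{A}$, completing the induction and hence the theorem. The bookkeeping to keep the types and the Harrop case distinctions consistent across all rules is laborious but routine; the genuinely substantive content is the fixed-point case described above.
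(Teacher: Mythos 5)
The paper does not prove this theorem itself: it defers entirely to the detailed proof in~\citep{BergerTsuiki21}, and your outline --- structural induction on $\IFP$ derivations, Harrop-sensitive extraction at implications, realization of the induction/coinduction rules by $\rec$ with the verification resting on the least/greatest fixed point property of the realizability interpretation (and monotonicity from strict positivity), and axioms discharged by their admissibility provisos --- is exactly the strategy of that cited proof. So your proposal is correct and takes essentially the same approach, with only the minor caveat that the fixed points of the interpreted operators live in the lattice of predicates over $D$ rather than in $D$ itself (only the $\rec$ terms are least fixed points in $D$).
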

A detailed proof can be found in~\citep{BergerTsuiki21}.
%

The realizers extracted from proofs are given as programs 
in a simple functional programming language, that is,
an enriched lambda calculus given by the grammar
\begin{align*}
&  M,N,L :: = 
a,b,c \ \ \text{(variables)}\\
&\quad  |\  \lambda a.\,M 
\ | \ M\,N 
\ | \ \rec\,M \ | \  \bot \\
&\quad |\    \Nil\ | \ \Left(M)\ | \ \Right(M)\ | \ \Pair(M,N)\\
&\quad  |\
 \caseof{M} \{\ Cl_1,\ldots, Cl_n\} 
%
\end{align*}
The $Cl_i$ are clauses of the form $C(a_1,\ldots,a_k) \to N$
where $C$ is a constructor  ($\in \{\Nil,\Left,\Right,\Pair\}$).
A program $\rec\,M$ denotes the least fixed point of the function defined by $M$.
Details of the denotational semantics of programs can be found 
in~\citep{BergerTsuiki21}. There, also a matching operational semantics is given.

A system similar to $\IFP$ is $\TCF$ 
(Theory of Computable Functionals)\index{Theory of Computable Functionals}\index{$\TCF$},
which is the formal system underlying the interactive proof system 
Minlog\index{Minlog}~\citep{SchwichtenbergWainer12,BergerMiyamotoSchwichtenbergSeisenberger11}. A comprehensive account of $\TCF$ and examples of program extraction
in Minlog can be found in~\citep{SchwichtenbergWainer12}. 
The main difference between $\IFP$ and $\TCF$ is that while $\IFP$ concentrates
on the integration of abstract and constructive mathematics, the focus in $\TCF$
is on higher type computable functionals as a standard model.

In the following, if we say of a formula that it ``holds'' or ``implies'' 
another formula we always mean that the corresponding formulas are provable 
in the formal system in question, which will be $\IFP$ in the next two sections,
and its concurrent variant, $\CFP$, thereafter.


\section{Real numbers}
\label{sec-realnumbers}
We consider an instance of $\IFP$ with a sort $\reals$ 
for real numbers, the constants $0$ and $1$ and 
function symbols for addition subtraction, multiplication, division, and
absolute value, as well as relation constants $<$ and $\le$.
Furthermore, we assume the axioms of a real closed field as a set of nc formulas,
as well as the \emph{Archimedean axiom} that says that there are no infinite 
numbers, that is, by repeatedly subtracting $1$ any real number will eventually 
become negative. Setting $y \less x \ \eqdef \ 0 \le y \ \land\  y = x-1$, 
this can be expressed as the nc formula
\begin{equation}
\label{eq-ap}
\forall x\,\neg\pat_{\less}(x)\,.
\end{equation}

Hence, real numbers are given abstractly and axiomatically. 
In particular, no programs for computing the field operations are available.
However, we can use~(\ref{eq-nat}) to define the natural numbers as a subset 
$\NN$ of the reals and prove, for example, that they are closed
under addition,
\begin{equation}
\label{eq-natadd}
\forall x, y\,(\NN(x) \land \NN(y) \to \NN(x+y))\,.
\end{equation}
The extracted program will add (unary representations of) natural numbers. 
To extract programs for (exact) real number computation, we need to be able
to approximate real numbers by rational numbers. Therefore we define the predicate
\begin{equation}
\label{eq-approx}
\A(x) \ \eqdef \ \forall n \in\NN\, \exists q \in\QQ\, |x-q|\le 2^{-n}
\end{equation}
where $\QQ$ is a predicate characterising the rational numbers, for example,
\begin{equation}
\label{eq-rat}
\QQ(x) \ \eqdef \ \exists y,z,u\in\NN\ x = \frac{y-z}{u+1}\,.
\end{equation}
We call a real number $x$ satisfying $\A(x)$ \emph{constructive}.
A realizer of $\A(x)$
is a fast Cauchy sequence, that is, a 
sequence $f$ of rational numbers converging quickly to $x$:
\begin{equation}
\label{eq-rat1}
\ire{f}{\A(x)} \ \leftrightarrow \ \forall n\in\NN\,|x-f(n)|\le 2^{-n}\,.
\end{equation}
The realizers of $\A(x)$ are sequences of rational numbers implemented as
\emph{functions} from the natural numbers to the rationals.
To obtain as realizers sequences that are implemented as infinite \emph{streams}, 
one can use the following coinductive predicate:
\begin{equation}
\label{eq-approxco}
\Aco(x) \eqdef \Aco(x,0)\hbox {, where }\ 
\Aco(x,n) \eqnu \exists q\in\QQ |x-q|\le 2^{-n} \land \Aco(x,n+1)
\end{equation}
Then a realizer of $\Aco(x)$ is an infinite stream of rationals $q_0:q_1:\ldots$
such that $|x-q_n| \le 2^{-n}$ for all $n$.
It is easy to prove that the predicates $\A$ and $\Aco$ are equivalent.
From the proof of this equivalence one can extract programs that translate 
between the function and the stream representation of Cauchy sequences.

In~\citep{BergerTsuiki21} it is shown that $\A$ is also equivalent to the
coinductive predicate
\begin{equation}
\label{eq-sd}
\C(x) \eqnu \exists d \in \{-1,0,1\}\, (|d/2-x|\le 1/2 \wedge \C(2x-d))\,.
\end{equation}
A realizer of $\C(x)$ is a signed digit representations of $x$.
The programs extracted from the proof of the equivalence of $\A$ and $\C$
provide translations between the Cauchy and the signed digit representations.
The coinductive characterization of the signed digit representation has
been generalized to continuous functions~\citep{Berger11}
and compact sets~\citep{BergerSpreen16} and corresponding translations
have been extracted.

A predicate of arity $(\iota)$ is called a \emph{ring} if it
contains $0$ and $1$ and is closed under addition, subtraction and 
multiplication. A ring that is closed under division by elements that
are apart from $0$ is called a \emph{field}, where the 
\emph{apartness relation} between real numbers is defined as
\begin{equation}
\label{eq-apart}
x\,\apart\,y \ \eqdef \ \exists k \in\NN\,|x-y|\ge 2^{-k}
\end{equation}
Note that, in our setting, rings and fields are automatically ordered and
satisfy the Archimedean axiom~(\ref{eq-ap}).
A \emph{constructive field} is a field $F$ whose elements are all constructive,
that is, $F \subseteq \A$.
\begin{lemma}
\label{lem-cfield}
$\QQ$, $\A$, $\Aco$, and $\C$ are constructive fields.
\end{lemma}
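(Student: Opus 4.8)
The plan is to prove the assertion in full for $\A$ and then transfer it cheaply to the other three predicates, exploiting that the four predicates denote the same set of reals up to provable equivalence: $\A$ and $\Aco$ are equivalent, $\A$ and $\C$ are equivalent (both noted above), and $\QQ\subseteq\A$ since a rational is approximated by the constant sequence. Hence it suffices to establish that $\A$ is a constructive field and then transport the closure properties along these equivalences.

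First I would dispatch the cheap part, that $\A$ is a ring containing $0$ and $1$. Membership of $0$ and $1$ is witnessed by the constant approximations $q=0$ and $q=1$. For closure under addition and subtraction, given $\A(x)$, $\A(y)$ and a target precision $2^{-n}$, I would pick rationals $q,r$ with $|x-q|\le 2^{-(n+1)}$ and $|y-r|\le 2^{-(n+1)}$, so that $q\pm r$ approximates $x\pm y$ to within $2^{-n}$. Throughout, the fact that $q\pm r$, $qr$, and $q/r$ are again rational rests on closure of $\QQ$ under the arithmetic operations, which in turn reduces to closure of $\NN$, for instance~(\ref{eq-natadd}).

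The substantive work is multiplication and, above all, division. For multiplication I would first use the Archimedean axiom~(\ref{eq-ap}) to obtain a natural number bound $N$ with $|x|,|y|\le N$; then from $|xy-qr|\le |x||y-r|+|r||x-q|$, choosing $q,r$ to a precision depending on $N$ makes $qr$ meet the target. Division is the main obstacle. Given $\A(x)$, $\A(y)$ and a witness $y\apart 0$, i.e.\ $|y|\ge 2^{-k}$, I would approximate $y$ to within $2^{-(k+1)}$ by a rational $r$, forcing $|r|\ge 2^{-(k+1)}$ and hence $|yr|\ge 2^{-(2k+1)}$, so the denominator is bounded away from $0$. Writing $x/y-q/r=(x(r-y)+y(x-q))/(yr)$ and using the magnitude bounds on $x,y$, the target precision $2^{-n}$ is reached by choosing $q,r$ to a precision computed non-uniformly from $k$ and $N$. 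This non-uniformity -- the precision of the rational approximants must be read off from the apartness witness and the Archimedean bound -- is the delicate point, and it is exactly what the apartness hypothesis in the definition of a field is there to supply.

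Finally I would transport. Since $\Aco(x)\leftrightarrow\A(x)$ and $\C(x)\leftrightarrow\A(x)$ are provable, $\Aco$ and $\C$ satisfy the same closure conditions as $\A$ and, being extensionally equal to $\A$, are contained in $\A$; so both are constructive fields. For $\QQ$ the closure conditions are immediate from the arithmetic of rationals (division uses that $r\apart 0$ forces $r\neq 0$, whence $1/r\in\QQ$), and $\QQ\subseteq\A$ as already noted, which settles all four cases.
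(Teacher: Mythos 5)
The paper states Lemma~\ref{lem-cfield} without proof, treating it as routine, so there is no official argument to compare against; the only question is whether your proof would go through in $\IFP$. Its overall architecture does: proving everything for $\A$ and transporting along the provable equivalences $\A\leftrightarrow\Aco$, $\A\leftrightarrow\C$ and the inclusion $\QQ\subseteq\A$ is sound, and your division estimate (denominator bounded below via the apartness witness $k$, the identity $x/y-q/r=(x(r-y)+y(x-q))/(yr)$) is exactly the expected computational content; it matches the paper's remark, just after the lemma, that the extracted inversion program takes the witness $k$ as an extra argument.

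The genuine flaw is the step ``use the Archimedean axiom~(\ref{eq-ap}) to obtain a natural number bound $N$ with $|x|,|y|\le N$''. In this framework that step fails, and not merely stylistically. The axiom~(\ref{eq-ap}) is a non-computational formula: its only realizer is $\Nil$, so it can never produce the numeral $N$. Indeed the statement $\forall x\,\exists N\in\NN\,|x|\le N$ (for arbitrary reals $x$) is not realizable at all: realizers of existential formulas are uniform in the quantified variable and carry no witness, while a realizer of $\NN(N)$ is the unary numeral for $N$; hence a realizer of that formula would be a single numeral bounding every real, which contradicts the field axioms. This is precisely why the paper must invoke $\BT$ and Archimedean induction to extract any computational content from~(\ref{eq-ap}). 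The repair is immediate and stays inside your proof: the bound must come from the hypothesis $\A(x)$ itself --- instantiate~(\ref{eq-approx}) at $n=0$ to get $q\in\QQ$ with $|x-q|\le 1$, and compute a numeral $N\ge |q|+1$ from the representation~(\ref{eq-rat}) of $q$, using closure of $\NN$ under the arithmetic operations, e.g.~(\ref{eq-natadd}). The same substitution is needed in your division paragraph where the ``magnitude bounds on $x,y$'' enter. With that change the proof is correct.
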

Recall that statements about validity of formulas are to be understood 
as provability in $\IFP$. Therefore, a constructive field provides,
via program extraction, implementations of the field operations with respect
to the real number representation defined by the field. 
Since closure under the multiplicative inverse is required only for
numbers that are apart from $0$, the extracted inversion program has
as extra argument a natural number $k$ that guarantees that $|x|\ge
2^{-k}$. In the following, we show that this extra argument can be
omitted if one is willing to accept the generalized Brouwer's Thesis,
$\BT$.

Using $\BT$, the Archimedean axiom (\ref{eq-ap}) becomes 
$\forall x\,\acc_{\less}(x)$ (where, as in (\ref{eq-ap}), 
$y \less x \ \eqdef \ 0 \le y \ \land\  y = x-1$), 
and this can in turn be used to prove 
the following two induction rules, called 
\emph{Archimedean Induction}~\citep{BergerTsuiki21}.
Both rules are parametrized by a positive rational number $q$.
In~(\ref{eq-aiprimitive}), $P$ is an arbitrary predicate.
In~(\ref{eq-ai}), $R$ is a ring and $B$ is arbitrary. 
\begin{equation}
\label{eq-aiprimitive}
\infer[]{
  \forall x \ne 0\,  P(x)
}{
 \forall x \ne 0\, ((|x| \leq q \to P(2x)) \to P(x))
}
\end{equation}
\begin{equation}
\label{eq-ai}
\infer[]{
  \forall x \in R\setminus\{0\} \,B(x) 
}{
\forall x\in R\setminus\{0\}\,(B(x)\lor(|x| \leq q \land (B(2x) \to B(x))))
}
\end{equation}
Unlike~(\ref{eq-ap}), the Archimedean Induction rules do have computational content:
\begin{lemma}[$\BT$, \citep{BergerTsuiki21}]
\label{lem-ai}
(\ref{eq-aiprimitive}) and (\ref{eq-ai}) are provable.

If $s$ realizes the premise of (\ref{eq-aiprimitive}), then the extracted program 
realizing the conclusion is $\rec\,s$.

If $s$ realizes the premise of~(\ref{eq-ai}), then the extracted program 
realizing the conclusion is
\[\chi\,a \eqrec \mycase\,s\,a\,\of\,\{\Left(b) \to b; 
                                \Right(f) \to f(\chi\,(\multwo(a)))\}\]
where $\multwo$ is extracted from a proof that $R$ is closed under doubling.
\end{lemma}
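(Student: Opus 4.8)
The plan is to derive both rules from a single wellfounded induction along a \emph{doubling} order whose wellfoundedness is supplied by the Archimedean axiom via $\BT$. For this proof write $\less_0$ for the subtraction order $y \less_0 x \eqdef 0 \le y \land y = x-1$ of (\ref{eq-ap}), and introduce the nc relation $y \less x \eqdef 0 < |x| \le q \land y = 2x$, so that $2x \less x$ holds exactly when $0 < |x| \le q$. The key lemma is $\forall x \ne 0\,\acc_{\less}(x)$: I would first establish $\forall x\ne 0\,\neg\pat_{\less}(x)$ and then instantiate $\BT$ at $\less$.

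I expect the derivation of $\neg\pat_{\less}(x)$ to be the main obstacle, since this is where the coinductive ``no infinite descending path'' must be exchanged for an inductive fact using the Archimedean content of (\ref{eq-ap}). Unfolding an infinite $\less$-path from a nonzero $x$ (by induction on $k$) forces $|2^k x| \le q$, hence $2^k \le q/|x|$, for every $k\in\NN$; since $k \le 2^k$ this makes the real $q/|x|$ exceed every natural number. A short coinduction --- the predicate $\forall k\in\NN\,k \le r$ is a post-fixed point of the operator defining $\pat_{\less_0}$ --- then yields $\pat_{\less_0}(q/|x|)$, contradicting (\ref{eq-ap}). Hence $\neg\pat_{\less}(x)$, and $\BT$ gives $\acc_{\less}(x)$ for every $x \ne 0$.

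With $\acc_{\less}$ in hand I would prove (\ref{eq-aiprimitive}) by strictly positive induction on $\acc_{\less}(x)$: the induction hypothesis delivers $P(2x)$ precisely when $0 < |x| \le q$, i.e.\ when $2x \less x$, which is exactly a realizer of the guarded hypothesis $|x| \le q \to P(2x)$ (the guard being Harrop); feeding it to the premise yields $P(x)$. Reading off the computational content, the induction hypothesis is nothing but the recursive call on the same uniform program, so the extracted realizer of the conclusion is the fixed point $\rec\,s$, its definedness guaranteed by accessibility. For (\ref{eq-ai}) I would run the same induction but first decide, at each step, the disjunction realized by $s\,a$: a value $\Left(b)$ already realizes $B(x)$ and is returned directly, whereas a value $\Right(f)$ carries (since $|x|\le q$ is Harrop) a function $f$ realizing $B(2x)\to B(x)$; there $0<|x|\le q$, so $2x \less x$, and I recurse on $2x$ --- threading the ring-membership realizer through $\multwo$ to produce a realizer of $R(2x)$ --- to obtain $B(2x)$ and then apply $f$. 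This is precisely the program $\chi$ displayed in the statement, with termination again furnished by $\acc_{\less}$.
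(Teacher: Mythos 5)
Your proof is correct, but it routes the work differently from the paper's (very terse) proof. The paper applies $\BT$ only once, to the subtraction order $y \less_0 x \eqdef 0\le y \land y = x-1$ in which the Archimedean axiom (\ref{eq-ap}) is stated, obtaining $\forall x\,\acc_{\less_0}(x)$; it then derives (\ref{eq-aiprimitive}) by wellfounded induction along \emph{that} order, using properties of $x\mapsto 2^{-x}$ to trade the doubling of $x$ against the decrementing of an exponent. It then gets (\ref{eq-ai}) as a one-line corollary of (\ref{eq-aiprimitive}) by instantiating $P(x)\eqdef R(x)\to B(x)$: the premise of (\ref{eq-ai}) implies (using $\multwo$ for closure of $R$ under doubling) the premise of (\ref{eq-aiprimitive}) for this $P$, which also makes it transparent that the displayed $\chi$ is exactly $\rec$ applied to the induced step function. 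You instead tailor a fresh nc relation to the recursion actually performed, $y\less x$ iff $0<|x|\le q\land y=2x$, instantiate $\BT$ there (legitimate, since $\BT$ is a schema over nc relations and yours is nc), and discharge $\neg\pat_{\less}(x)$ by a transfer in the opposite direction to the paper's: where the paper pushes \emph{accessibility} forward through $2^{-x}$, you push a hypothetical doubling \emph{path} forward into a $\less_0$-descending path from $q/|x|$ (via $2^k|x|\le q$, $k\le 2^k$, and the coinduction showing that $\forall k\in\NN\,k\le r$ is a post-fixed point of the operator defining $\pat_{\less_0}$), contradicting (\ref{eq-ap}). Your choice makes the induction for (\ref{eq-aiprimitive}) immediate and avoids the arithmetic edge cases of the exponential transfer, at the price of a second instantiation of $\BT$ and the path-transfer argument; and proving (\ref{eq-ai}) by a second direct induction, as you do, duplicates work that the paper's substitution trick gets for free, but is equally valid and produces the same program $\chi$.

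Two harmless inaccuracies, neither a gap. First, definedness of $\rec\,s$ is neither claimed nor needed in this lemma, so ``definedness guaranteed by accessibility'' is beside the point --- what the induction delivers is that $\rec\,s$ realizes $P(x)$ for all $x\neq 0$; definedness only becomes an issue for the restricted variant, Lemma~\ref{lem-air}, where strictness of $B$ carries that burden. Second, the expression $2^k x$ presupposes an exponentiation symbol the language need not have; this is routinely avoided by proving, say, $\forall x\,(\pat_{\less}(x)\land x\neq 0\to k\le q/|x|)$ directly by induction on $k\in\NN$.
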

\begin{proof}
(\ref{eq-aiprimitive}) can be inferred from $\forall x\,\acc_{\less}(x)$ 
using properties of the function $x \mapsto 2^{-x}$.

(\ref{eq-ai}) follows directly from (\ref{eq-aiprimitive}),
since with $P(x) \eqdef R(x) \to B(x)$, the premise of (\ref{eq-ai}) implies
the premise of (\ref{eq-aiprimitive}).
\end{proof}

\begin{lemma}[$\BT$]
\label{lem-apart}
Non-zero elements of a constructive fields are apart from zero.
\end{lemma}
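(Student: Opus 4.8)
The plan is to derive the statement from the Archimedean Induction rule (\ref{eq-ai}), exploiting that apartness from zero interacts cleanly with doubling. Let $F$ be a constructive field, so that $F$ is in particular a ring and $F \subseteq \A$, and instantiate (\ref{eq-ai}) with $R := F$, $B(x) :\equiv (x \apart 0)$ and rational parameter $q := 1$. Unfolding (\ref{eq-apart}), the conclusion $\forall x \in F \setminus \{0\}\, B(x)$ of the rule is precisely the claim, so it suffices to establish its premise
\[ \forall x \in F \setminus \{0\}\, (B(x) \lor (|x| \le 1 \land (B(2x) \to B(x)))). \]

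First I would record two elementary facts. The implication in the right disjunct holds unconditionally: if $|2x| \ge 2^{-k}$ then $|x| \ge 2^{-(k+1)}$, so $B(2x) \to B(x)$ for every $x$. Second, apartness can be read off an approximation: since $x \in \A$, I invoke (\ref{eq-approx}) at level $n = 2$ to obtain a rational $p$ with $|x - p| \le 2^{-2} = 1/4$.

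Next, a decidable comparison of $p$ with $1/2$ selects the branch of the disjunction. If $|p| \ge 1/2$, then $|x| \ge |p| - |x-p| \ge 2^{-2}$, so $B(x)$ holds with witness $k = 2$ and I take the left disjunct. If $|p| < 1/2$, then $|x| \le |p| + |x-p| < 3/4 \le 1$, which gives the bound $|x| \le 1$; together with the unconditional implication $B(2x) \to B(x)$ this yields the right disjunct. Since the branch is chosen by a decidable test on the rational $p$, the disjunction is genuinely realized, and (\ref{eq-ai}) then delivers $\forall x \in F \setminus \{0\}\, (x \apart 0)$.

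The only appeal to $\BT$ is through Archimedean Induction, and this is also where the actual content lies. Classically $x \ne 0$ only gives $|x| > 0$, with no bound $k$, so apartness cannot be obtained directly. The force of (\ref{eq-ai}) is that the extracted program may repeatedly double $x$---permitted because $F$ is a ring---and test each $2^n x$ against the fixed threshold; as $2^n|x| \to \infty$, some $2^n x$ must eventually clear it, at which point the approximation certifies apartness and the doubling implication transports the witness back to $x$. I expect the one point needing care to be the bookkeeping of the instance of (\ref{eq-ai}): checking that the threshold $q = 1$ and the approximation level $n = 2$ are compatible, so that the left disjunct and the doubling implication together exhaust both outcomes of the rational test, rather than any nontrivial analytic estimate.
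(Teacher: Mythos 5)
Your proposal is correct and follows essentially the same route as the paper's proof: instantiate Archimedean Induction (\ref{eq-ai}) with $B(x)\equiv x\apart 0$, use that $B(2x)\to B(x)$ holds (realized by the successor), and establish the premise by approximating $x$ with a rational and performing a decidable comparison to choose between $x\apart 0$ and $|x|\le q$. The only differences are inessential numerical choices (you take $q=1$ with a $2^{-2}$-approximation and threshold $1/2$, while the paper takes $q=3$ with a $2^0$-approximation and threshold $2$), and both sets of constants check out.
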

\begin{proof}
Let $F$ be a constructive field. We have to show 
$\forall x\in F\setminus\{0\}\,x\apart 0$.
We use Archmedean Induction (\ref{eq-ai}) with $q=3$ and the
fact that $2x\apart 0 \to x\apart 0$ to
reduce this task to showing that for every $x\in F\setminus\{0\}$, 
$x\apart 0$ or $|x| \leq 3$.
Using the constructivity of $F$, we have $q\in\QQ$ such that 
$|x-q|\le 1$. If $|q|>2$, then $|x| \ge 1$, 
hence $x\apart 0$. If $|q| \le 2$, then $|x|\le 3$.
\end{proof}
Since the implication $2x\, \apart\,0 \to x\, \apart\,0$ is realized by 
the successor function, the program extracted from the above proof is
(after trivial simplifications)
\begin{eqnarray*}
\varphi_1(f) &=& 
  \myifthenelse{|\appr(f,0)| > 2}{0}{1 + \varphi_1 (\multwo(f))}
\end{eqnarray*} 
where $\appr$ realizes the constructivity of $F$, that is,
if $f$ realizes $F(x)$ and $k\in\NN$, 
then $\appr(f,k)$ is a rational number whose
distance to $x$ is at most $2^{-k}$.

\emph{Remark.} It may seem strange that we introduce constructive analysis
in an axiomatic way; however, this approach is not uncommon in
constructive mathematics. For example, \citep{BridgesReeves99} introduces
constructive axioms for the real line, 
\citep{BridgesVita03,BridgesIshiharaSchusterVita08} study an axiomatic
theory of apartness and nearness, and~\citep{Petrakis16} gives a constructive
axiomatic presentation of limit spaces.
In our case, the rational for an axiomatic approach to the real numbers is that
it allows us to separate computationally irrelevant properties of the real
numbers such as algebraic identities and the Archimedean axiom from
computationally relevant ones such as inclusion of the integers and
approximability by rationals.


\section{Concurrency}
\label{sec-concurrency}
In~\citep{BergerTsuikiCFP}, the system $\CFP$ (concurrent fixed point logic)
is introduced that extends $\IFP$ by a logical operator $\Set$ permitting 
the extraction of programs with two concurrent threads. However, 
to do Gaussian elimination concurrently, we need $n$ threads where $n$
varies through all positive numbers less or equal the dimension of the matrix 
to be inverted.
We will first sketch the essential features of $\CFP$ and then
describe an extension that caters for this more general form of concurrency. 

\subsection{$\CFP$}
\label{sec-cfp}
%
%
%
$\CFP$ extends $\IFP$ by two logical operators, $\rt{B}{A}$ (restriction) and
$\Set(A)$ (two-threaded concurrency). 
Ultimately, we are interested 
only in the latter but the former is needed to derive
concurrency in a non-trivial way.  

$\rt{B}{A}$ (read `$A$ restricted to $B$') 
is a strengthening of the reverse implication $B \to A$.
The two differ only in their realizability semantics
(the following explanation is valid if $B$ is non-computational
and $A$ has only realizers that are defined ($\neq\bot$)): 
A realizer of  $\rt{B}{A}$ is a `conditional partial realizer' of $A$, that is, 
some $a\in \tau(A)$ such that 
\begin{itemize}
\item[(i)] if the restriction $B$ holds, then $a$ is defined, and 
\item[(ii)] if $a$ is defined, then $a$ realizes $A$. 
\end{itemize}
In contrast, for $a$ to realize $B\to A$ it suffices that if $B$ holds, then 
$a$ realizes $A$; condition (ii) may fail, that is, if we only know that 
$a$ is defined, we cannot be sure that $a$ realizes $A$.
For example, if $A$ is $B\lor C$, then $\Left(\Nil)$ realizes 
$B \to A$ but not $\rt{B}{A}$ unless $B$ holds. 
The technical definition permits arbitrary $B$ but requires $A$ to satisfy a
syntactic condition, called 
`strictness', 
that ensures that $A$ has 
only defined realizers.

Before defining strictness we need to extend the Harrop property to $\CFP$ formulas.
Harropness is defined as for $\IFP$ but stipulating in addition that formulas
of the form $\rt{B}{A}$ or $\Set(A)$ are always non-Harrop.
Now strictness is defined as follows:
A non-Harrop implication is strict if the premise is non-Harrop.
Formulas of the form $\diamond x\,A$ ($\diamond\in\{\forall,\exists\}$) or
$\munu(\lambda X\lambda\vec x\,A)$ ($\munu\in\{\mu,\nu\}$) are
 \emph{strict} if $A$ is strict.
Formulas of other forms (e.g., $\rt{A}{B}$,  $\Set(A)$, $X(\vec{t})$) 
are not strict.

For formulas of the for $\rt{B}{A}$ or  $\Set(A)$ to be wellformed, 
we require $A$ to be strict. In the following it is always assumed 
that the strictness requirement is fulfilled.

Realizability for restriction is defined as
\begin{eqnarray*} 
a\,\re\,(\rt{B}{A}) &\eqdef& (\re(B) \to \defined{a}) \land
                               (\defined{a} \to a\,\re\,A)
\end{eqnarray*}
where $\re(B) \eqdef \exists b\,b\,\re\,B$ (`$B$ is realizable').

$\Set(A)$ (read `concurrently $A$') is not distinguished from $A$ in the classical
semantics but realizers of this formula can be computed by two processes which 
run concurrently, at least one of which has to terminate, and each 
of terminating one has to deliver a realizer of $A$.
To model this denotationally, the domain 
and the programming language are 
extended by a binary constructor
$\Amb$ which (denotationally) is an exact copy of the constructor $\Pair$. 
However, operationally, $\Amb$ is interpreted as a version of McCarthy's 
ambiguity operator amb~\citep{McCarthy1963} that corresponds to 
globally angelic choice~\citep{ClingerHalpern85}. 
%
Realizability of $\Set(A)$ is defined as
\begin{multline*}
c\ \re\ \Set(A)
  \eqdef 
   c = \Amb(a, b) \land a,b:\tau(A) \land 
           (\defined{a} \lor \defined{b})\ \land\\
 (\defined{a} \to a\, \re\, A) \land 
            (\defined{b} \to b\, \re\, A).
\end{multline*}
We also set $\tau(\Set(A)) \eqdef\Am(\tau(A))$ where $\Am$ is a new 
type constructor with 
$D(\Am(\rho))\eqdef\{\Amb(a,b) \mid a,b\in D(\rho)\}\cup\{\bot\}$.

The proof rules for restriction and concurrency are as follows:

\[
\infer[\hbox{Rest-intro ($A_0, A_1, B$ Harrop)}]{
        \rt{B}{(A_0 \vee A_1)} 
}{
B \to A_0 \vee A_1 \ \ \     \neg B \to A_0 \wedge A_1
}
\]

\[
\begin{array}{ll}
\infer[\hbox{Rest-bind}]{
      \rt{B}{A'}
}{
 \rt{B}{A}\ \ \          A \to (\rt{B}{A'})
}
\ \ \ \ \ \ \ \ & 
\infer[\hbox{Rest-return \ \ \ }]{  
 \rt{B}{A}
}{
  A
}  \\\\
  \infer[\hbox{Rest-antimon}]{
    \rt{B'}{A}
    }{
    \rt{B}{A} \ \ \    B' \to B
}&
  \infer[\hbox{Rest-mp}]{
    A
}{
\rt{B}{A} \ \ \    B
}
\end{array}
\]
\[
\begin{array}{ll}
  \infer[\hbox{Rest-efq}]{
  \rt{\False}{A}
}{
}
\ \ \ \ \ \ \ \ &
\infer[\hbox{Rest-stab}]{
   \rt{\neg\neg B}{A} 
    }{
    \rt{B}{A}
}
\end{array}
\]

\[
  \infer[\hbox{Conc-lem}]{
  \Set(A)
}{
\rt{C}{A}  \ \ \ \     \rt{\neg C}{A}
}
\qquad
  \infer[\hbox{Conc-return}]{\ 
  \Set(A)
}{
A
}
\]

\[
  \infer[\hbox{Conc-mp}]{\
\Set(B)
}{
  A\to B\ \ \  \Set(A) 
}\ \ \ \ \ 
%
\]
Using the rules (Rest-stab) and (Rest-antimon) on can show that the rule 
(Conc-lem) is equivalent to
\begin{equation}
\label{eq-concclassor}
  \infer[\hbox{Conc-class-or}]{
  \Set(A)
}{
 \neg\neg(B \lor C)  \ \ \ \   \rt{B}{A}   \ \ \ \   \rt{C}{A}
}
\end{equation}

We also have a variant of the Archimedean Induction which we call
\emph{Archimedean Induction with restriction}. It applies to 
rings $R$, 
strict 
predicates $B$  and rational numbers $q>0$:
\begin{equation}
\label{eq-airest}
 \infer[]{
   \forall x \in R\, (\rt{x\neq 0}{B(x)})
 }{
  \forall x \in R\, 
      (B(x) \lor (|x| \le q \land (B(2x) \to B(x))))
 }
\end{equation}

To extract programs from $\CFP$ proofs, the programming language for $\IFP$
needs, in addition to the new constructor $\Amb$, a strict version of 
application $\strictapp{M}{N}$ which is undefined if the argument $N$ is.
As indicated earlier, denotationally, $\Amb$
is just a constructor (like $\Pair$). However, the operational 
semantics interprets 
$\Amb$ as globally angelic choice, which matches the realizability 
interpretation of the concurrency operator $\Set$, 
as shown in~\citep{BergerTsuikiCFP}.
\begin{theorem}[Soundness of CFP]
\label{thm-soundnesscfp}
%
From a $\CFP$ proof of a formula $A$ one can extract a closed 
program 
$M:\tau(A)$ such that $\ire{M}{A}$ is provable.
\end{theorem}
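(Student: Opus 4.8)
The plan is to prove the theorem by induction on the derivation of $A$ in $\CFP$, extending the soundness argument for $\IFP$ (Theorem~\ref{thm-soundness}, proved in~\citep{BergerTsuiki21}). Every $\IFP$ rule is treated exactly as there, so the only genuinely new work concerns the rules for the two additional operators, restriction $\rt{B}{A}$ and concurrency $\Set(A)$, together with Archimedean Induction with restriction~(\ref{eq-airest}). Throughout I would carry as an invariant the fact that strict formulas have only defined realizers: this is precisely what makes clauses (i) and (ii) in the realizability of $\rt{B}{A}$ coherent, and it is what justifies the uses of strict application $\strictapp{\cdot}{\cdot}$ in the extracted terms below.

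First I would treat the restriction rules. For (Rest-return) the extracted program is simply the realizer $a$ of the premise $A$: since $A$ is strict, $a$ is defined, so both conjuncts of $\ire{a}{(\rt{B}{A})}$ hold trivially. For (Rest-mp), from $\ire{a}{(\rt{B}{A})}$ and realizability of $B$, clause (i) yields $\defined{a}$ and then clause (ii) yields $\ire{a}{A}$, so again $a$ is the program. The key case is (Rest-bind): from realizers $a$ of $\rt{B}{A}$ and $f$ of $A \to (\rt{B}{A'})$ I would extract $\strictapp{f}{a}$. Strict application is exactly what is needed: if $\re(B)$ then $a$ is defined and realizes $A$, so $f\,a$ realizes $\rt{B}{A'}$ and is itself defined, giving clause (i); conversely, definedness of $\strictapp{f}{a}$ forces $a$ to be defined, hence $\ire{a}{A}$ and then $\ire{(f\,a)}{(\rt{B}{A'})}$, giving clause (ii). The rules (Rest-antimon), (Rest-efq) and (Rest-stab) follow by unfolding $\re$ on restriction and on negation and a classical case distinction on whether $B$ is realizable (for (Rest-stab) using that $\re(\neg\neg B)$ is classically equivalent to $\re(B)$); for (Rest-intro) one additionally uses that the Harrop subformulas $A_0,A_1$ have $\Nil$ as their only realizer, so that the two premises jointly determine the correct injection $\Left(\Nil)$ or $\Right(\Nil)$.

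Next I would treat concurrency, where the constructor $\Amb$ enters. For (Conc-return) the program is $\Amb(a,\bot)$, which realizes $\Set(A)$ since $a$ is defined and the second thread is vacuously correct. For (Conc-mp) it is $\mapAmb\,f$ applied to the realizer of $\Set(A)$, mapping $f$ over both threads by strict application so that an undefined thread stays undefined while a defined thread is sent to a defined realizer of the strict formula $B$. The crucial rule is (Conc-lem): from $\ire{a}{(\rt{C}{A})}$ and $\ire{b}{(\rt{\neg C}{A})}$ I would extract $\Amb(a,b)$. The implications $\defined{a}\to\ire{a}{A}$ and $\defined{b}\to\ire{b}{A}$ are inherited from clause (ii) of the two premises, so what remains is the disjunction $\defined{a}\lor\defined{b}$ demanded by realizability of $\Set(A)$. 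Here I would unfold $\re(\neg C)$ to $\neg\re(C)$ and argue classically by cases on $\re(C)$: in the first case clause (i) of the first premise gives $\defined{a}$, in the second clause (i) of the second premise gives $\defined{b}$. The equivalent form (Conc-class-or) is handled the same way, now using the premise $\neg\neg(B\lor C)$ and stability of the goal. Finally, Archimedean Induction with restriction~(\ref{eq-airest}) reduces to the plain rule~(\ref{eq-ai}) as in Lemma~\ref{lem-ai}, the extracted realizer being the recursive program $\chi$ of that lemma read under the restricted realizability interpretation.

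The main obstacle is the concurrency rule, not the calculations. Denotationally $\Amb$ is a mere copy of $\Pair$, so all the verifications above take place in the classical ambient logic and go through by the indicated case distinctions; the delicate point within the present proof is to keep strictness (hence definedness of all relevant realizers) intact across the $\Amb$-introducing steps and to discharge the definedness-disjunction of (Conc-lem) correctly. Deeper still, and what gives these extracted programs their intended meaning, is that $M$ is to be \emph{run} under an operational semantics in which $\Amb$ is globally angelic choice, and the clause for $\Set(A)$—at least one thread terminates and every terminating thread yields a realizer—must match that behaviour. Establishing this adequacy, so that the denotational statement $\ire{M}{A}$ proved here actually certifies the concurrent computation, is the genuinely hard part; it is carried out in~\citep{BergerTsuikiCFP}, and I would invoke it to complete the argument.
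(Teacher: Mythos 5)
Most of your proposal tracks the paper's actual proof: the paper also reduces everything to Theorem~\ref{thm-soundness} plus simple, explicitly definable realizers for the restriction and concurrency rules (deferring the details, and the operational adequacy of $\Amb$ as globally angelic choice, to \citep{BergerTsuikiCFP}), and it also notes that (Conc-lem) and (Conc-class-or) are realized by the constructor $\Amb$. Your realizers for (Rest-return), (Rest-mp), (Rest-bind), (Conc-return), (Conc-mp) and (Conc-lem) are the intended ones, and your use of strictness to guarantee definedness is exactly the invariant the paper relies on.

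The genuine gap is your final step: the claim that Archimedean Induction with restriction~(\ref{eq-airest}) ``reduces to the plain rule~(\ref{eq-ai}) as in Lemma~\ref{lem-ai}''. This is precisely the one rule that the paper singles out as \emph{not} admitting such a reduction; it gets its own argument (Lemma~\ref{lem-air}). The reduction fails because the conclusion of~(\ref{eq-airest}) is $\forall x\in R\,(\rt{x\neq 0}{B(x)})$: the extracted program $\chi$ must satisfy, for \emph{every} $x\in R$ and every $a$ with $\ire{a}{R(x)}$, both (i) $x\neq 0\to\defined{\chi\,a}$ and (ii) $\defined{\chi\,a}\to\ire{(\chi\,a)}{B(x)}$. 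Clause (ii) must hold even when $x=0$, where the program has no termination guarantee but may nonetheless terminate; no instance of~(\ref{eq-ai}) or~(\ref{eq-aiprimitive}), whose conclusions quantify only over $x\neq 0$, can deliver it. In the paper, clause (i) is indeed obtained from~(\ref{eq-aiprimitive}), but clause (ii) requires a different tool: one passes to the finite approximations $\chi_n$ of $\chi$, uses continuity to get $\defined{\chi_n\,a}$ for some $n$, and then induces on $n$ (a Scott-induction with the admissible predicate $\lambda d\,.\,\forall a\,(\defined{d\,a}\to\ire{(\chi\,a)}{B(x)})$). Moreover, the program of Lemma~\ref{lem-ai} that you propose to reuse, with ordinary application $f(\chi\,(\multwo\,a))$, does not in general satisfy clause (ii): if the recursive call diverges, $f$ applied to $\bot$ may still return a defined value that need not realize $B(x)$. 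This is exactly why the realizer in Lemma~\ref{lem-air} uses strict application $\strictapp{f}{(\chi\,(\multwo\,a))}$ — definedness of the whole term then forces definedness of the recursive call, which is what makes the induction for clause (ii) go through. Your proof becomes complete once this final reduction is replaced by the two-part argument of Lemma~\ref{lem-air}.
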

\begin{proof}
The realizability of the new version (\ref{eq-airest}) of 
Archimedean Induction is shown in the Lemma below.
All other rules have simple 
realizers that can be explicitly defined~\citep{BergerTsuikiCFP}.
For example, (Conc-lem) and (Conc-class-or), rules which permit 
to derive concurrency with a form of the law of excluded middle, are 
realized by the constructor $\Amb$. 
\end{proof}

\begin{lemma}
\label{lem-air}
Archimedean Induction with Restriction is realizable.
If $s$ realizes the premise of (\ref{eq-airest}),
then $\chi$, defined recursively by
\[\chi\,a = \mycase\,s\,a\,\of\,\{\Left(b) \to b; 
                             \Right(f) \to \strictapp{f}{(\chi\,(\multwo\,a))}\},\]
realizes the conclusion. 
%
%
\end{lemma}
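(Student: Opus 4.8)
The plan is to verify directly that the recursively defined $\chi$ meets the realizability clause for the restricted conclusion of (\ref{eq-airest}), by splitting that clause into its two halves and treating them separately. First I would unfold both sides. Writing the conclusion as $\forall x\in R\,(\rt{x\neq 0}{B(x)})$, a realizer is a $\chi$ that, given $\ire{a}{R(x)}$, returns a value $\chi\,a$ realizing $\rt{x\neq 0}{B(x)}$; since $x\neq 0$ is non-computational, the restriction semantics demands exactly (i) if $x\neq 0$ then $\defined{\chi\,a}$, and (ii) if $\defined{\chi\,a}$ then $\ire{\chi\,a}{B(x)}$. On the premise side, applying $s$ to a realizer $a$ of $R(x)$ gives $s\,a$ realizing the disjunction, so $s\,a$ is either $\Left(b)$ with $\ire{b}{B(x)}$, or $\Right(f)$ with $\ire{f}{(B(2x)\to B(x))}$ together with, from the Harrop conjunct, the side-condition $|x|\le q$. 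I also record the one external fact I need, that the doubling combinator satisfies $\ire{a}{R(x)}$ implies $\ire{\multwo\,a}{R(2x)}$.

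I would prove (ii) first, as it holds unconditionally and uses only the fixed-point structure. Approximating $\chi$ by its finite unfoldings $\chi_0=\lambda a.\,\bot$ and $\chi_{k+1}=\lambda a.\,\mycase\,s\,a\,\of\,\{\Left(b)\to b;\,\Right(f)\to \strictapp{f}{(\chi_k\,(\multwo\,a))}\}$, with $\chi=\bigsqcup_k\chi_k$, it suffices to show by induction on $k$ that, for all $x$ and all $a$ with $\ire{a}{R(x)}$, definedness of $\chi_k\,a$ implies $\ire{\chi_k\,a}{B(x)}$. The base case is vacuous. In the step, if $s\,a=\Left(b)$ then $\chi_{k+1}\,a=b$ and $\ire{b}{B(x)}$; if $s\,a=\Right(f)$ then $\chi_{k+1}\,a=\strictapp{f}{(\chi_k\,(\multwo\,a))}$, and here the strict application is essential: definedness of $\chi_{k+1}\,a$ forces $\chi_k\,(\multwo\,a)$ to be defined, whence the value equals $f\,(\chi_k\,(\multwo\,a))$, the induction hypothesis at $2x$ (using $\ire{\multwo\,a}{R(2x)}$) gives $\ire{\chi_k\,(\multwo\,a)}{B(2x)}$, and $\ire{f}{(B(2x)\to B(x))}$ then yields $\ire{\chi_{k+1}\,a}{B(x)}$.

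For (i) I would bound the recursion depth by the Archimedean axiom (\ref{eq-ap}), exactly as in the proof of Lemma~\ref{lem-ai}. Fix $x\neq 0$ and $a$ with $\ire{a}{R(x)}$; then $|x|>0$, so there is an $n\in\NN$ with $2^n|x|>q$, and iterating the doubling combinator gives $\ire{\multwo^n a}{R(2^n x)}$. Since the $\Right$ branch carries the side-condition $|{\cdot}|\le q$ on its argument, $s\,(\multwo^n a)$ cannot have the form $\Right(f)$ (that would force $2^n|x|\le q$), so it is $\Left(b)$ and $\chi\,(\multwo^n a)=b$, which is defined because $B$ is strict and therefore admits only defined realizers. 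I would then run a downward induction on $k=n,n-1,\ldots,0$ showing each $\chi\,(\multwo^k a)$ is defined: in the $\Left$ case immediately, and in the $\Right$ case $\chi\,(\multwo^k a)=\strictapp{f}{(\chi\,(\multwo^{k+1} a))}$ is defined because the inductively known definedness of $\chi\,(\multwo^{k+1} a)$ together with clause (ii) makes it a genuine realizer of $B(2^{k+1}x)$, to which $f$ delivers a defined realizer of $B(2^k x)$. Taking $k=0$ gives $\defined{\chi\,a}$, establishing (i).

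The main obstacle is the tension that strict application creates between the two definedness obligations. Strictness of application is precisely what makes clause (ii) go through, since without it $f$ could discard a diverging recursive call and manufacture a defined but non-realizing value; yet the same strictness is what makes clause (i) non-trivial, because definedness now genuinely hinges on the recursion reaching a $\Left$ leaf. The Archimedean axiom supplies exactly that guarantee, as doubling a nonzero number escapes the bound $q$ in finitely many steps, so the $\Right$ branch cannot be taken indefinitely. Aligning the strictness of $B$ (defined realizers only) with the restriction semantics (definedness tracking the condition $x\neq 0$) is the delicate point; everything else is routine unfolding.
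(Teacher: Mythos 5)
Your overall decomposition is the same as the paper's: unfold the restriction semantics into the two obligations (i) definedness when $x\neq 0$ and (ii) realizability when defined, handle (ii) via the finite approximations $\chi_k$ plus continuity, and handle (i) via the Archimedean property bounding the depth of the recursion. However, your treatment of (ii) has a genuine gap at the reduction step. You claim it suffices to prove, by induction on $k$, that $\defined{\chi_k\,a}$ implies $\ire{(\chi_k\,a)}{B(x)}$. It does not: continuity turns $\defined{\chi\,a}$ into $\defined{\chi_k\,a}$ for some $k$, and your induction then shows that the \emph{approximation} $\chi_k\,a$ realizes $B(x)$, but the goal is that $\chi\,a$ realizes $B(x)$. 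Realizer sets are not in general upward closed under $\sqsubseteq$ (for instance $\Amb(a,\bot)$ may realize a $\Set$-formula while a larger element $\Amb(a,c)$ with defined junk $c$ does not), so $\chi_k\,a \sqsubseteq \chi\,a$ does not transfer realizability. The repair is either a further induction on $k$, using strictness of application, showing that a \emph{defined} $\chi_k\,a$ is actually \emph{equal} to $\chi\,a$, or—what the paper does—running the induction on the mixed statement $\ire{a}{R(x)} \land \defined{\chi_k\,a} \to \ire{(\chi\,a)}{B(x)}$, whose hypothesis mentions the approximation but whose conclusion mentions the full fixed point; the same case analysis then closes, ending with $f\,(\chi\,a') = \strictapp{f}{(\chi\,a')} = \chi\,a$ once $\defined{\chi\,a'}$ is known (where $a'=\multwo\,a$).

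Your part (i) also deviates from the paper in a way that matters for formalizability. You invoke an explicit $n\in\NN$ with $2^n|x|>q$ and then do downward induction from $n$; but for the abstract, axiomatically given reals, the statement $\forall x\,(x\neq 0 \to \exists n\in\NN\,2^n|x|>q)$ is classical: no single $n$ works uniformly in $x$, so it is not realizable and hence cannot be adopted as an axiom, and deriving it would require deciding order relations on reals, which the intuitionistic verifying system cannot do. This is precisely why the paper proves (i) by strengthening it to $\forall x\neq 0\,\forall a\,(\ire{a}{R(x)} \to \ire{(\chi\,a)}{B(x)})$ and applying the $\BT$-derived rule~(\ref{eq-aiprimitive}), which needs no numerical witness; definedness then follows from strictness of $B$ alone, and, as a by-product, (1) is proved independently of (2), whereas your downward induction consumes clause (ii). Your argument is sound as classical reasoning about the intended model, but it does not go through in the system in which, per the soundness theorem, the verification is supposed to be carried out; note also that the proof of Lemma~\ref{lem-ai}, which you cite as your template, derives rule~(\ref{eq-aiprimitive}) from accessibility rather than extracting an explicit bound $n$.
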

\begin{proof}
Assuming $\ire{a}{R(x)}$ we have to show
\begin{itemize}
\item[(1)] $x\ne 0 \to \defined{\chi\,a}$.
\item[(2)] $\defined{\chi\,a} \to \ire{(\chi\,a)}{B(x)}$.
\end{itemize} %
For (1) it suffices to show (1') 
$\forall x\neq 0 \,\forall a\,(\ire{a}{R(x)} \to \ire{(\chi\,a)}{B(x)})$ 
since $\ire{(\chi\,a)}{B(x)}$ implies $\defined{\chi\,a}$, 
by the strictness of $B(x)$.

We show (1') by Archimedean Induction~(\ref{eq-aiprimitive}). 
Let $x\neq 0$ and assume, as i.h.,
\[|x| \le q \to \forall a'\,(\ire{a'}{R(2x)} \to \ire{(\chi\,a')}{B(2x)})\]  
We have to show $\forall a\,(\ire{a}{R(x)} \to \ire{(\chi\,a)}{B(x)})$.
Assume $\ire{a}{R(x)}$. Then
$\ire{(s\,a)}{(B(x) \lor (|x| \le q \land (B(2x) \to B(x))))}$.
If $s\,a = \Left(b)$ where $\ire{b}{B(x)}$, then $\chi\,a = b$ and we are done.
If $s\,a = \Right(f)$ where $|x|\le q$ and
$\ire{f}{(B(2x) \to B(x))}$, then $\ire{a'}{R(2x)}$ for $a' = \multwo\,a$. 
Thus, by i.h., $\ire{(\chi\,a')}{B(2x)}$
and therefore $\ire{(f(\chi\,a'))}{B(x)}$. Since  $\defined{\chi\,a'}$
by the strictness of $B(2x)$, we have $\chi\,a = f(\chi\,a')$ and we are done.

To prove (2),
we consider the approximations of $\chi$,
\begin{eqnarray*}
\chi_0\,a &=& \bot\\
\chi_{n+1}\,a &=& \mycase\,s\,a\,\of\,\{\Left(b) \to b; 
                             \Right(f) \to \strictapp{f}{(\chi_n\,(\multwo\,a))}\}
\end{eqnarray*}
By continuity, if $\defined{\chi\,a}$, then $\defined{\chi_n\,a}$ for some
$n\in\NN$. Therefore, it suffices to show
\[ \forall n\in\NN\,\forall x,a\,(\ire{a}{R(x)} \land \defined{\chi_n\,a} 
     \to \ire{(\chi\,a)}{B(x)})\]
We induce on $n$. The induction base is trivial since $\chi_0\,a = \bot$.
For the step assume $\ire{a}{R(x)}$ and $\defined{\chi_{n+1}\,a}$. Then 
$\ire{(s\,a)}{(B(x) \lor (|x| \le q \land (B(2x) \to B(x))))}$.
If $s\,a = \Left(b)$ where $\ire{b}{B(x)}$, then $\chi\,a = b$ and we are done.
If $s\,a = \Right(f)$ where $|x|\le q$ and
$\ire{f}{(B(2x) \to B(x))}$, then 
$\chi_{n+1}\,a = \strictapp{f}{(\chi_n\,a')}$ for $a'=\multwo\,a$. It follows 
$\defined{\chi_{n}\,a'}$. By i.h., $\ire{(\chi\,a')}{B(2x)}$ and therefore
$\ire{(f\,(\chi\,a'))}{B(x)}$. But 
$f\,(\chi\,a') = \strictapp{f}{(\chi\,a')} = \chi\,a$
since $\defined{\chi\,a'}$.
\end{proof}
The proof above can also be viewed as an instance of 
Scott-induction with the admissible predicate 
$\lambda d\,.\,\forall a\,(\defined{d\,a} \to \ire{(\chi\,a)}{B(x)})$ 
applied to $d = \chi$. 

\paragraph{Remarks.} 
Thanks to the rule (Rest-stab), one can use classical logic for the right argument
of a restriction. The rules (Rest-bind) and (Rest-return) show that restriction 
behaves like a monad in its left argument. On the other hand, the concurrency 
operator does not enjoy this property since a corresponding bind-rule is missing. 
Instead, one has only the weaker rule (Conc-mp) which can also be seen as a 
monotonicity rule. This shortcoming will be addressed in Sect.~\ref{sec-mnd}.

\begin{lemma}[$\BT$]
\label{lem-restapart}
Elements $x$ of a constructive field satisfy $\rt{x\neq 0}{x\,\apart\,0}$.
%

The realizer extracted from the proof is the program $\varphi_1$ of 
Lemma~\ref{lem-apart}.
\end{lemma}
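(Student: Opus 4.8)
The plan is to derive the statement directly from Archimedean Induction with Restriction~(\ref{eq-airest}), instantiated with the ring $R = F$ (a constructive field is in particular a ring, hence closed under doubling, so that $\multwo$ is available), the predicate $B(x) \eqdef x \apart 0$, and the rational bound $q = 3$. First I would observe that $B$ meets the strictness requirement of~(\ref{eq-airest}): by~(\ref{eq-apart}), a realizer of $x \apart 0$ is a unary numeral coding some $k \in \NN$ with $|x| \ge 2^{-k}$, and such a numeral is always defined. With this instance, the conclusion of~(\ref{eq-airest}) reads $\forall x \in F\,(\rt{x\neq 0}{x \apart 0})$, which is exactly the assertion of the lemma, so it remains only to discharge the premise.

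To verify the premise $\forall x \in F\,(x \apart 0 \lor (|x| \le 3 \land (2x \apart 0 \to x \apart 0)))$, I would reuse the computation already carried out in the proof of Lemma~\ref{lem-apart}. Fix $x \in F$. Since $F \subseteq \A$, the constructivity of $x$ (instance $n = 0$ of~(\ref{eq-approx})) yields a rational $q'$ with $|x - q'| \le 1$, and the comparison of $q'$ against $2$ is decidable on $\QQ$. If $|q'| > 2$ then $|x| \ge |q'| - 1 > 1 = 2^0$, giving the left disjunct $x \apart 0$. If $|q'| \le 2$ then $|x| \le |q'| + 1 \le 3$, giving the right disjunct; the remaining implication $2x \apart 0 \to x \apart 0$ holds because $|2x| \ge 2^{-k}$ entails $|x| \ge 2^{-(k+1)}$, and is realized by the successor function on numerals.

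It then remains to identify the extracted realizer with $\varphi_1$. The realizer $s$ of the premise returns $\Left$ of the numeral $0$ in the case $|\appr(f,0)| > 2$ (where $\appr$ realizes $F \subseteq \A$) and $\Right$ of the successor function otherwise. Feeding this $s$ into the recursion scheme of Lemma~\ref{lem-air},
\[\chi\,a = \mycase\,s\,a\,\of\,\{\Left(b) \to b; \Right(f) \to \strictapp{f}{(\chi\,(\multwo\,a))}\},\]
and simplifying the two branches, gives precisely $\chi\,f = \myifthenelse{|\appr(f,0)| > 2}{0}{1 + \chi(\multwo(f))}$, i.e.\ the program $\varphi_1$ of Lemma~\ref{lem-apart}.

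I expect the main subtlety to lie not in the mathematics --- which merely repackages Lemma~\ref{lem-apart} --- but in the program-level bookkeeping: checking that $B(x) = x \apart 0$ is genuinely \emph{strict} (so that~(\ref{eq-airest}) is applicable and its conclusion carries the conditional-partial-realizer semantics of restriction), and confirming that the strict application $\strictapp{f}{(\chi\,(\multwo\,a))}$ in Lemma~\ref{lem-air} collapses to the plain successor step $1 + \varphi_1(\multwo(f))$ on the defined inputs that matter, so that the extracted program is literally $\varphi_1$.
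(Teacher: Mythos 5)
Your proposal is correct and follows essentially the same route as the paper: the paper's proof simply observes that the argument of Lemma~\ref{lem-apart} never uses the hypothesis $x\neq 0$ when establishing the premise, so that premise feeds directly into Archimedean Induction with Restriction~(\ref{eq-airest}) with $B(x) \eqdef x\apart 0$ and $q=3$, yielding the restricted conclusion with the same extracted program $\varphi_1$. Your write-up merely makes explicit the details (strictness of $x\apart 0$, the premise verification for arbitrary $x\in F$, and the identification of $\chi$ with $\varphi_1$ via Lemma~\ref{lem-air}) that the paper leaves implicit.
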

\begin{proof}
The proof is the same as for Lemma~\ref{lem-apart}, noticing that for the 
proof of the premise, the assumption $x\neq 0$ is not used.
\end{proof}

\emph{Remark.} In contrast to Lemma~\ref{lem-restapart}, 
the formula $\forall x\,(\A(x) \to (\rt{x=0}{x=0}))$ is 
not realizable (where $x=0$ is again an atomic nc formula). This can be seen
by a simple (domain-theoretic) continuity argument. If $\psi$ were
a realizer then, $\psi(\lambda n\,.\,0) = \Nil$, since $\lambda n\,.\,0$ 
realizes $\A(0)$ and $0=0$ is realizable i.e.\ holds. Since $\psi$ is continuous
there is some $k\in\NN$ such that $\psi(f) =\Nil$ where $f(n)=0$ if $n<k$
and $f(n)=2^{-(k+1)}$ if $n \ge k$. Clearly $f$ realizes $\A(2^{-(k+1)})$, but
$\Nil$ does not realize the false equation $2^{-(k+1)}=0$.

The following lemma is crucial for concurrent Gaussian elimination.
\begin{lemma}[$\BT$]
\label{lem-twopivot}
If elements $x,y$ of a constructive field are not both $0$, then,
concurrently, one of them is apart from $0$, that is,
$\Set(x\,\apart\,0 \lor y\,\apart\,0)$.

The extracted program is
\begin{eqnarray*}
\varphi_2(f,g) &=& \Amb(\strictapp{\Left}{(\varphi_1(f))},
                        \strictapp{\Right}{(\varphi_1(g))})\,.
\end{eqnarray*} 

\end{lemma}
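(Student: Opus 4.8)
The plan is to obtain the concurrency operator via rule~(\ref{eq-concclassor}) (Conc-class-or), taking the target formula $A \eqdef (x\apart 0 \lor y\apart 0)$ and the two restriction conditions $B \eqdef x\ne 0$ and $C \eqdef y\ne 0$. The rule then asks for three premises: the classical disjunction $\neg\neg(x\ne 0 \lor y\ne 0)$, together with the two restrictions $\rt{x\ne 0}{(x\apart 0\lor y\apart 0)}$ and $\rt{y\ne 0}{(x\apart 0\lor y\apart 0)}$.

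First I would discharge the disjunction premise. The hypothesis that $x,y$ are not both $0$ is $\neg(x=0 \land y=0)$, which is intuitionistically equivalent to $\neg\neg(x\ne 0 \lor y\ne 0)$ --- exactly the left premise of (Conc-class-or). This formula is Harrop (its disjunction sits under two negations, hence never at a strictly positive position), so it carries no computational content.

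Next I would build the two restriction branches out of Lemma~\ref{lem-restapart}, which supplies $\rt{x\ne 0}{x\apart 0}$ realized by $\varphi_1(f)$, where $f$ realizes $F(x)$. To inject this single apartness into the disjunction I would invoke (Rest-bind): the implication $x\apart 0 \to \rt{x\ne 0}{(x\apart 0\lor y\apart 0)}$ holds by left $\lor$-introduction followed by (Rest-return), and binding it against $\rt{x\ne 0}{x\apart 0}$ yields $\rt{x\ne 0}{(x\apart 0\lor y\apart 0)}$. On realizers the binding function is the constructor $\Left$, so --- using the strict application built into the restriction calculus --- the extracted realizer of this branch is $\strictapp{\Left}{(\varphi_1(f))}$. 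The symmetric argument through $y$ gives $\rt{y\ne 0}{(x\apart 0\lor y\apart 0)}$, realized by $\strictapp{\Right}{(\varphi_1(g))}$.

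Finally, (Conc-class-or) glues the two branches into $\Set(x\apart 0\lor y\apart 0)$ and is realized by the $\Amb$ of the branch realizers, delivering precisely $\varphi_2(f,g) = \Amb(\strictapp{\Left}{(\varphi_1(f))}, \strictapp{\Right}{(\varphi_1(g))})$. The one point needing care is the strictness bookkeeping in (Rest-bind): since $\varphi_1(f)$ is only a conditional partial realizer --- defined exactly when $x\ne 0$ --- the outer $\Left$ must be applied strictly, so that definedness of the branch continues to track the restriction $x\ne 0$ rather than always producing a defined $\Left(\bot)$. This is exactly why $\strictapp{\Left}{-}$, and not ordinary application, occurs in $\varphi_2$. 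The appeal to $\BT$ enters solely through Lemma~\ref{lem-restapart}.
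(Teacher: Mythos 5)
Your proposal is correct and takes essentially the same route as the paper's proof: both use Lemma~\ref{lem-restapart} for each of $x$ and $y$, lift the single apartness into the disjunction via (Rest-return)/(Rest-bind) (the paper phrases this as monotonicity of restriction), and then apply (Conc-class-or) with the premise $\neg\neg(x\neq 0 \lor y\neq 0)$ obtained from the hypothesis that $x,y$ are not both $0$. Your extra remarks on the intuitionistic equivalence $\neg(x=0\land y=0)\leftrightarrow\neg\neg(x\neq 0\lor y\neq 0)$ and on why the constructor $\Left$ must be applied strictly simply make explicit what the paper leaves implicit.
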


\begin{proof}
By Lemma~\ref{lem-restapart} and the fact that restriction is monotone in its
left argument (which follows from the rule (Rest-bind)), we have
$\rt{x\neq 0}{(x\,\apart\,0 \lor y\,\apart\,0)}$ as well as
$\rt{y\neq 0}{(x\,\apart\,0 \lor y\,\apart\,0)}$.
Since not both $x$ and $y$ are $0$, we can apply rule (Rest-class-or).
%
%
\end{proof}
\emph{Remarks.} Intuitively, the extracted program $\varphi_2$ 
consists of two processes 
$\alpha=\strictapp{\Left}{(\varphi_1(f))}$ and 
$\beta=\strictapp{\Right}{(\varphi_1(g))}$ 
which search concurrently for approximations to $x$ and $y$ respectively, 
until $\alpha$ finds a 
$k$ where $|f(k)|>2^{-k}$ or $\beta$ finds an $l$ where 
$|g(l)|>2^{-l}$, returning the numbers with a corresponding flag $\Left$ or 
$\Right$. Both, or only one of the searches might 
be successful and which of the successful search results is taken is 
decided nondeterministically. 

In this particular situation, concurrency could be avoided (even without 
synchronization or interleaving): Since
not both $x$ and $y$ are $0$, $x^2+y^2$ is nonzero and hence apart
from $0$. Computing sufficiently good
approximations to $x^2+y^2$ and $x^2$ and comparing them, one can decide
whether $x$ or $y$ is apart from $0$.  However, to compute approximations
of $x^2+y^2$, both, $x$ and $y$ need to be equally well approximated, thus
the overall computation time is the \emph{sum} of the computation times for 
$x$ and $y$.
%
%
However, using concurrency the computation time will be the \emph{minimum}. 
Imagine $x$ and $y$ being very small positive reals and the realizer $f$ of
$F(x)$ providing very fast, but the realizer $g$ of $F(y)$ very slow
approximations.
Then the extracted concurrent program will terminate fast with a 
result $\Left(k)$ (computed by the process $\alpha$ searching $f$).
%
Hence, this example illustrates the fact that concurrent
realizability not only supports a proper treatment of partiality,
as in the case study on infinite Gray code~\citep{BergerTsuikiCFP},
but also enables us to exploit the efficiency gain of concurrency at the
logical level.
\subsection{Finitely threaded concurrency}
\label{sec-bnd}
$\CFP$ can be easily generalized to concurrency with an 
arbitrary finite number of threads. 
%
We generalize the operator $\Set$ to $\Set_n$, where $n$ is a positive 
natural number (so that $\Set$ corresponds to $\Set_2$), and allow the 
constructor $\Amb$ to take an arbitrary positive but finite number of arguments.
We define
that $a$ realizes $\Set_n(A)$ if 
$a$ is of the form $\amb(a_1,\ldots,a_m)$
with $1\le m\le n$
such that at least one $a_i$ is defined and all defined $a_i$ realize $A$:
\begin{eqnarray*}
a\,\re\,\Set_n(A) 
&\eqdef& \bigvee_{1\le m\le n}\,(a = \amb(a_1,\ldots,a_m) \land\\
&& \qquad \quad \bigvee_{1 \le i \le m} \defined{a_i} \land
           \bigwedge_{1 \le i \le m} (\defined{a_i} \to a_i\,\re\,A)) 
\end{eqnarray*}
%
%


The proof rules for $\Set_n$ are similar to those for $\Set$:
%

\[
  \infer[\hbox{Conc-class-or-n}]{
  \Set_{n}(A) }{
  \neg\neg \bigvee_{1\le i\le n}B(i) \qquad
   \bigwedge_{1\le i \le n} \rt{B(i)}{A} }
\]

\[
  \infer[\hbox{Conc-return-n}]{
  \Set_{n}(A) }{
  A}
\qquad
  \infer[\hbox{Conc-mp-n}]{
  \Set_{n}(B) }{
  A\to B \qquad \Set_{n}(A) }
\]

\begin{lemma}[$\BT$]
\label{lem-npivot}
If elements $x_1,\ldots, x_n$ of a constructive field are not all $0$, then,
$n$-concurrently, one of them is apart from $0$, that is,
$\Set_n(x_1\,\apart\,0 \lor \ldots \lor x_n\,\apart\,0)$.

Setting 
$\In_i \eqdef \lambda\,a\,.\,\Right^{i-1}(\Left(a))$ for $i\in\{1,\ldots,n\}$,
the extracted program is
\begin{eqnarray*}
\varphi_3(f_1,\ldots,f_n) &=& 
  \amb(\strictapp{\In_1}{(\varphi_1(f_1))},\ldots,
       \strictapp{\In_n}{(\varphi_1(f_n))})\,.
\end{eqnarray*} 

\end{lemma}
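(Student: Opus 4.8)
The plan is to follow the same template that produced Lemmas~\ref{lem-twopivot} and~\ref{lem-npivot}'s two-threaded predecessor, now generalized to $n$ threads. First I would record, for each index $i\in\{1,\ldots,n\}$, the restricted apartness fact supplied by Lemma~\ref{lem-restapart}: since each $x_i$ is an element of the constructive field, we have $\rt{x_i\neq 0}{(x_i\apart 0)}$. By monotonicity of restriction in its left argument (a consequence of (Rest-bind), exactly as used in the proof of Lemma~\ref{lem-twopivot}) and by weakening the right-hand disjunct into the full disjunction, each of these strengthens to
\[
  \rt{x_i\neq 0}{(x_1\apart 0 \lor \ldots \lor x_n\apart 0)}.
\]
This gives all $n$ premises $\bigwedge_{1\le i\le n}\rt{B(i)}{A}$ required by the rule (Conc-class-or-n), where $B(i)\eqdef x_i\neq 0$ and $A$ is the $n$-fold disjunction.

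Next I would discharge the remaining premise of (Conc-class-or-n), namely the classically-doubly-negated disjunction $\neg\neg\bigvee_{1\le i\le n}(x_i\neq 0)$. The hypothesis ``$x_1,\ldots,x_n$ are not all $0$'' is precisely $\neg\bigwedge_{1\le i\le n}(x_i=0)$, and since each $x_i\neq 0$ abbreviates $\neg(x_i=0)$, a routine (intuitionistically valid) manipulation of De Morgan under a double negation yields $\neg\neg\bigvee_{1\le i\le n}\neg(x_i=0)$ from $\neg\bigwedge_{1\le i\le n}(x_i=0)$. Applying (Conc-class-or-n) to these ingredients then delivers the conclusion $\Set_n(x_1\apart 0\lor\ldots\lor x_n\apart 0)$.

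For the extracted program I would track realizers through the rule. By Lemma~\ref{lem-restapart}, each restricted premise $\rt{x_i\neq 0}{(x_i\apart 0)}$ is realized by $\varphi_1(f_i)$, where $f_i$ realizes the constructivity of $x_i$. Injecting the $i$-th disjunct into the $n$-ary disjunction amounts to applying the tag $\In_i=\lambda a.\,\Right^{i-1}(\Left(a))$, and since the realizer of a restriction may be undefined, this must be a \emph{strict} application, giving $\strictapp{\In_i}{(\varphi_1(f_i))}$ for the $i$-th thread. The rule (Conc-class-or-n) is realized by the $n$-ary constructor $\amb$ (just as (Conc-class-or) was realized by $\Amb$ in Theorem~\ref{thm-soundnesscfp}), so the assembled realizer is exactly
\[
  \varphi_3(f_1,\ldots,f_n)=\amb(\strictapp{\In_1}{(\varphi_1(f_1))},\ldots,\strictapp{\In_n}{(\varphi_1(f_n))}),
\]
as claimed.

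I do not expect a genuine obstacle here: the statement is the evident $n$-ary analogue of Lemma~\ref{lem-twopivot}, and every ingredient—restricted apartness, monotonicity of restriction, and the $n$-threaded concurrency rule—has already been established in the preceding material. The only point requiring mild care is matching the realizer bookkeeping to the $\amb$/$\strictapp{\cdot}{\cdot}$ semantics of (Conc-class-or-n), ensuring that the strict injections $\strictapp{\In_i}{(\cdot)}$ correctly propagate definedness so that at least one thread terminates and every terminating thread delivers a genuine realizer of the disjunction. This is routine given the definition of $a\,\re\,\Set_n(A)$ recorded above.
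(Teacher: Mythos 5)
Your proof is correct and takes essentially the same route as the paper: the paper's own proof just says ``Similar to Lemma~\ref{lem-twopivot} using the rules for $\Set_n$'', and that template is precisely your argument---Lemma~\ref{lem-restapart} for each $x_i$, monotonicity of restriction in the restricted formula (via (Rest-bind)/(Rest-return)) to weaken into the $n$-fold disjunction, and then (Conc-class-or-n), with $\amb$ and the strict injections $\strictapp{\In_i}{(\cdot)}$ assembling exactly the stated realizer $\varphi_3$. The explicit De Morgan step under double negation and the realizer bookkeeping are details the paper leaves implicit, and you have filled them in correctly.
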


\begin{proof}
Similar to Lemma~\ref{lem-twopivot} using the rules for $\Set_n$.
%
\end{proof}
\emph{Remark.} We treated the index $n$ as a parameter on the meta-level,
i.e.\ $\Set_{n}$ is an operator for every $n$. 
On the other hand, in a stringent 
formalization, one has to treat $n$ as a formal parameter, 
i.e.\ $\Set_{n}(A)$ is a formula with a free variable $n$. Finitely iterated
conjunctions and disjunctions have to be formalized in a similar way.
This can be easily done, using formal inductive definitions, however we refrain
from carrying this out, since it would obscure
the presentation and impede the understanding of what follows.

\subsection{Monadic concurrency}
\label{sec-mnd}

The concurrency operators $\Set$ and $\Set_n$ were sufficient for the examples 
studied so far, since these did not involve iterated concurrent computation, 
i.e.\ we did not use the result of one concurrent computation as a parameter 
for another one. In other words, we did not use concurrency `in sequence'.
However, such sequencing will be required in our concurrent modelling
of Gaussian elimination (Sect.~\ref{sec-gauss}) since the choices of
previous pivot elements influence the choice of the next one.

The problem of sequencing becomes apparent if we nest 
non-determinism as defined by $\Set_n$ since this increases the
index $n$ bounding the number of processes. More precisely, the rule
\[
  \infer[\hbox{}]{
  \Set_{n^2}(A)
}{
\Set_n(\Set_n(A)\lor\False)
}
\]
is realizable, but this is optimal - we cannot lower the number $n^2$
in the conclusion to $n$.
Mathematically, the problem is that $\Set_n$ is not a monad.
However, we can turn $\Set_n$ into a monad by an inductive definition:
\[\Setomega_n(A) \eqmu \Set_n(A\lor\Setomega_n(A))\]
Clearly, $a\,\re\,\Setomega_n(A)$
holds iff $a$ is of the form $\amb(a_1,\ldots,a_m)$ for some $1\le m\le n$
such that at least one $a_i$ is defined and all defined $a_i$ realize 
$A\lor \Setomega_n(A)$.
%

Note that $\Setomega_n(A)$ is wellformed for arbitrary formulas $A$, not only
strict
ones.
\begin{lemma}
\label{lem-setstar}
The following rules follow from the rules for $\Set_n$ and are hence
realizable (for arbitrary formulas):
%
\[
  \infer[\hbox{Conc-return-n*}]{\ 
  \Setomega_n(A)
}{
A
}
\qquad
  \infer[\hbox{Conc-weak-bind}]{\
\Setomega_n(B)
}{
  A\to \Setomega_n(B)\ \ \  \Set_n(A) 
}
\]

\[
  \infer[\hbox{Conc-bind}]{\
\Setomega_n(B)
}{
  A\to \Setomega_n(B)\ \ \  \Setomega_n(A) 
}
\]
\end{lemma}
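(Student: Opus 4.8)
The plan is to derive all three rules inside $\CFP$ from the generic rules for $\Set_n$ (Conc-return-n, Conc-mp-n, Conc-class-or-n) together with the closure and induction rules of the inductive definition $\Setomega_n(A) \eqmu \Set_n(A \lor \Setomega_n(A))$; realizability then follows from the Soundness theorem (Theorem~\ref{thm-soundnesscfp}). Two facts will be used repeatedly. First, the closure (fold) direction of the least fixed point gives $\Set_n(A \lor \Setomega_n(A)) \to \Setomega_n(A)$, while the $\mu$-equation gives the converse. Second, disjunction introduction gives $A \to (A \lor \Setomega_n(A))$ and $\Setomega_n(B) \to (B \lor \Setomega_n(B))$.

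The two `one-step' rules are immediate. For (Conc-return-n*), from the premise $A$ one obtains $A \lor \Setomega_n(A)$, applies (Conc-return-n) to get $\Set_n(A \lor \Setomega_n(A))$, and folds to $\Setomega_n(A)$. For (Conc-weak-bind), I would compose the premise $A \to \Setomega_n(B)$ with $\Setomega_n(B) \to (B \lor \Setomega_n(B))$ to obtain $A \to (B \lor \Setomega_n(B))$; applying (Conc-mp-n) to the second premise $\Set_n(A)$ yields $\Set_n(B \lor \Setomega_n(B))$, which folds to $\Setomega_n(B)$.

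The genuine monadic bind (Conc-bind) is the main point, and it is where induction enters. Here I would first establish the implication $\Setomega_n(A) \to \Setomega_n(B)$ and then combine it with the premise $\Setomega_n(A)$. To prove the implication I use strictly positive induction on the least fixed point $\Setomega_n(A)$ with the predicate $Q \eqdef \Setomega_n(B)$: it then suffices to verify the step $\Set_n(A \lor \Setomega_n(B)) \to \Setomega_n(B)$. For this I transform the body by a case analysis $A \lor \Setomega_n(B) \to B \lor \Setomega_n(B)$ — in the left case I apply the premise $A \to \Setomega_n(B)$ and inject on the right, in the right case I inject directly — then apply (Conc-mp-n) to get $\Set_n(B \lor \Setomega_n(B))$ and fold to $\Setomega_n(B)$.

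I expect the step $\Set_n(A \lor \Setomega_n(B)) \to \Setomega_n(B)$ to be the crux, and the conceptual obstacle to be recognizing that only this single step needs to be checked: $\Set_n$ on its own is not a monad (nesting blows the thread bound up from $n$ to $n^2$, as noted above), so bind cannot be obtained by any fixed number of applications of the $\Set_n$-rules. It is precisely the inductive definition of $\Setomega_n$ that lets the induction absorb arbitrarily deep nestings into a single $\Set_n$-layer via the fold step. Finally I would remark that (Conc-weak-bind) is in fact a special case of (Conc-bind), since $\Set_n(A) \to \Setomega_n(A)$ follows from (Conc-mp-n) and one fold; nonetheless it is worth proving directly, as it requires no induction.
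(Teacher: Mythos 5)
Your proposal is correct and takes essentially the same route as the paper's own proof: (Conc-return-n*) and (Conc-weak-bind) via disjunction introduction, (Conc-mp-n), and folding the fixed-point equation defining $\Setomega_n$, and (Conc-bind) via strictly positive induction on $\Setomega_n(A)$ with induction predicate $\Setomega_n(B)$, reducing everything to the step $A \lor \Setomega_n(B) \to \Setomega_n(B)$. Your additional details (the explicit fold steps, and the closing remark that (Conc-weak-bind) is a special case of (Conc-bind)) merely spell out what the paper leaves implicit.
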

\begin{proof}
(Conc-return-n*) follows immediately from the definition of $\Setomega_n$.

To show (Conc-weak-bind), assume $A\to \Setomega_n(B)$ and $\Set_n(A)$.
Then also $A\to B \lor \Setomega_n(B)$ and hence, by (Conc-mp-n), 
$\Set_n(B \lor \Setomega_n(B))$. It follows $\Setomega_n(B)$, by the definition
of $\Setomega_n$.

To prove (Conc-bind), we assume $A\to \Setomega_n(B)$ and show
$\Setomega_n(A) \to \Setomega_n(B)$ by strictly positive induction
on the definition of $\Setomega_n(A)$. Hence, it suffices to show
$A \lor \Setomega_n(B) \to \Setomega_n(B)$, which follows from the assumption
$A\to \Setomega_n(B)$.
\end{proof}

We characterize realizability of $\Setomega_n(A)$ by term rewriting.
Call $\amb(\vec a)$ \emph{sound} if 
$\forall\, i\,(\defined{a_i}\to \exists\, b\,(a_i=\Left(b)\lor a_i =\Right(b))$.
Now define
\begin{eqnarray*}
a \to_n a' &\eqdef& 
\exists\, m \le n \exists\, \vec a = a_1,\ldots,a_m 
   \land a = \amb(\vec a)\ \hbox{sound}\land\\
 &&\qquad  \bigvee_i(\exists\, b\,a_i = a' = \Left(b) \lor a_i = \Right(a')) 
\end{eqnarray*}
\begin{lemma}
\label{lem-snrr} 
$a\,\re\,\Setomega_n(A)$ iff $a$ is reducible w.r.t. $\to_n$ and 
every reduction sequence $a \to_n a' \to_n \ldots$ terminates and ends with
some $\Left(b)$ such that $b\,\re\,A$.
\end{lemma}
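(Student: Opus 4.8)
The statement is an equivalence between an inductively defined realizability predicate and a condition phrased via the rewrite relation $\to_n$, so the plan is to prove the two implications by two complementary inductions, reasoning about realizers at the level of the domain $D$. Throughout I let $Q(a)$ abbreviate the right-hand side of the equivalence, ``$a$ is reducible w.r.t.\ $\to_n$ and every reduction sequence from $a$ terminates in some $\Left(b)$ with $b\,\re\,A$''.

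For the left-to-right direction I would use that the realizability of $\Setomega_n(A) \eqmu \Set_n(A\lor\Setomega_n(A))$ is the least fixed point of the operator $S\mapsto\{a \mid a\,\re\,\Set_n(A\lor S)\}$, and apply its induction principle with $S=\{a\mid Q(a)\}$. Thus it suffices to show that $a\,\re\,\Set_n(A\lor Q)$ implies $Q(a)$. So assume $a=\amb(a_1,\ldots,a_m)$ with $1\le m\le n$, at least one $a_i$ defined, and every defined $a_i$ realizing $A\lor Q$, that is $a_i=\Left(b)$ with $b\,\re\,A$, or $a_i=\Right(c)$ with $Q(c)$. Then $a$ is sound and reducible, and each one-step reduct is either a normal form $\Left(b)$ with $b\,\re\,A$ (when a $\Left$-component is chosen) or a term $c$ satisfying $Q(c)$ (when a $\Right$-component is chosen); in the latter case the continuation of the sequence is a reduction sequence from $c$, which terminates correctly by $Q(c)$. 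Hence every reduction sequence from $a$ terminates correctly, giving $Q(a)$.

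For the right-to-left direction I would first observe that, at the meta-level, ``every reduction sequence from $a$ terminates'' says exactly that $a$ lies in the accessible part of the branching relation $a'\less a\eqdef a\to_n a'$; classically this is $\neg\pat_{\less}(a)$, and since we argue about elements of $D$ in the classical metatheory, ordinary wellfounded induction along $\to_n$ is available without invoking $\BT$. I would then prove by wellfounded induction on $\to_n$ that $Q(a)$ implies $a\,\re\,\Setomega_n(A)$. Given $Q(a)$, reducibility forces $a=\amb(a_1,\ldots,a_m)$ to be sound with $m\le n$ and some component defined, so I reconstruct one realizer clause of $\Set_n(A\lor\Setomega_n(A))$ per defined component: a component $\Left(b)$ yields the normal form reached by $a\to_n\Left(b)$, whence $b\,\re\,A$ by $Q(a)$; a component $\Right(c)$ gives $a\to_n c$, and prefixing any reduction sequence of $c$ with this step shows $Q(c)$, so $c\,\re\,\Setomega_n(A)$ by the induction hypothesis. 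Assembling these clauses gives $a\,\re\,\Set_n(A\lor\Setomega_n(A))$, i.e.\ $a\,\re\,\Setomega_n(A)$.

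The delicate point, and the step I expect to require the most care, is making the case split of a $\to_n$-step coincide exactly with the disjuncts of the realizer clause $A\lor\Setomega_n(A)$. In particular one must verify that following a $\Right$-component always lands on a genuine $\amb$-subterm that is itself reducible, so that the induction hypothesis applies and the terminal $\Left(b)$ is only ever reached through a top-level $\Left$-component; this is where the soundness condition on $\amb(\vec a)$ and the ``ends with $\Left(b)$'' clause must be used jointly, and it is the only place where the bookkeeping between the rewriting picture and the inductive realizability predicate is genuinely nontrivial.
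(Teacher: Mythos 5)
Your proposal follows the paper's own argument essentially step for step: your left-to-right direction is the paper's strictly positive induction on the inductive definition of $a\,\re\,\Setomega_n(A)$ (least-fixed-point induction with $S=\{a\mid Q(a)\}$ is the same principle), and your right-to-left direction is the paper's wellfounded induction on $\to_n$ restricted to iterated reducts of $a$, with the identical case analysis on $\Left$/$\Right$ components via soundness. The ``delicate point'' you flag --- that following a $\Right$-component must land on a reducible term for the induction hypothesis to apply --- is genuine, and it is precisely the step the paper's proof passes over in silence (it invokes the induction hypothesis for the reduct $a'$ without checking that $a'$ is reducible, which is exactly the reducibility clause of $Q$), so your write-up is, if anything, more explicit than the paper about the argument's only subtle spot.
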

\begin{proof}
``only if'': Induction on $a\,\re\,\Setomega_n(A)$. 
Assume $a\,\re\,\Setomega_n(A)$.
Then $a = \amb(\vec a)$ and $a$ is reducible. Let $a \to_n a' \to_n \ldots$
be a reduction sequence. Case $a_i=a'=\Left(b)$.
Then $a'$ is not further reducible and $b\,\re\,A$, hence we are done.
Case $a_i=\Right(a')$ and $a'\,\re\,\Setomega_n(A)$. 
Then the induction hypothesis applies.
``if'': Assume that $a$ is reducible and 
every reduction sequence $a \to_n a' \to_n \ldots$ terminates and ends with
some $\Left(b)$ where $b\,\re\,A$. 
This means that the relation $\to_n$ restricted to iterated reducts
of $a$ is wellfounded. We show $a \,\re\,\Setomega_n(A)$ by 
induction on $\to_n$.
Since $a$ is reducible, $a \to_n a'$ for some $a'$. Hence $a=\amb(\vec a)$
and either $a_i = a' = \Left(b)$ for some $b$, or else $a_i = \Right(a')$.
Hence the first part of the definition of $a \,\re\,\Setomega_n(A)$ holds.
For the second part assume $\defined{a_i}$. Since $\amb(\vec a)$ is sound,
either $a_i=\Left(b)$ or $a_i=\Right(b)$.
Assume $a_i=\Left(b)$. We have to show $b\,\re\,A$. Since 
$a \to_n \Left(b)$ it follows that $\Left(b)$ terminates 
a reductions sequence starting with $a$. Hence $b\,\re\,A$, by assumption.
Finally, assume $a_0=\Right(a')$. We have to show $a'\,\re\,\Setomega_n(A)$.
Since $a \to_n a'$ this holds by induction hypothesis.
\end{proof}

The characterization of $a\,\re\,\Setomega_n(A)$ given in Lemma~\ref{lem-snrr}
shows that $a$ can be viewed as a nondeterministic process requiring
at most $n$ processes running concurrently (or in parallel).
Each step $a\to_n a'$ represents such a non-deterministic computation.
The fact that \emph{every} reduction sequence will lead to a realizer of $A$
means that after each step all other computations running in parallel
may safely be abandoned since no backtracking will be necessary.


\section{Gaussian elimination}
\label{sec-gauss}

By a \emph{matrix} we mean a quadratic matrix with real coefficients.
A matrix is \emph{non-singular} if its columns are linearly
independent. 
Let $\creals$ be a constructive field of reals (see Sect.~\ref{sec-realnumbers}).
A matrix $A$ is called \emph{$\creals$ matrix}
if all the coefficients of $A$ are in $\creals$.
We prove that every non-singular $\creals$ matrix has an $\creals$ left inverse,
and extract from the proof a concurrent version of the 
well-known matrix inversion algorithm based on Gaussian elimination. 
The proof is concurrent, because the non-zero element of a column 
(the pivot) is computed concurrently.

We fix a dimension $n\in\NN$ and let $k,l$ range over $\NN_n \eqdef \{1,\ldots,n\}$,
the set of valid matrix indices.
For a matrix $A$ we let $A(k,l)$ be its entry at the $k$-th row and $l$-th
column. 
%
%
We denote matrix multiplication by $A \ast B$, i.e.\ 
\[(A\ast B)(k,l) = \sum_{i=1}^{n} A(k,i)\cdot B(i,l)\]
By $E$ we denote the unit matrix, i.e.\ 
\[E(k,l) \eqdef \left\{
    \begin{array}{ll}
       1 &\hbox{if $k=l$}\\
       0      &\hbox{otherwise}
    \end{array}
     \right.
\]
We call an $\creals$ matrix \emph{invertible} if there exists an $\creals$ matrix $B$ 
such that $B \ast A = E$.

To formalize the above, one may add sorts for vectors
and matrices and a function for accessing their entries.
Furthermore, a function symbol for matrix multiplication which
needs three arguments, two for the matrices to be multiplied and one for the dimension.
The formula for matrix multiplication can then be stated as an axiom 
since it is admissibly Harrop (see Sect.~\ref{sec-ifp}).
Furthermore, one postulates admissible Harrop axioms for the existence of explicitly 
defined matrices, for example,
\[ \forall n\in\NN \exists E \forall i,j\in \{1,\ldots,n\} 
   ((i = j \to E(i,j)=1) \land ( i\neq j \to E(i,j)=0)\,. \] 
The notion of linear independence can be expressed as a Harrop formula as well.
There are many alternative formalization which, essentially, lead to the same result,
in particular, the extracted program will be the same.

\begin{theorem}[Gaussian elimination]
\label{thm-gauss}
Every non-singular $\creals$ matrix is invertible. 
\end{theorem}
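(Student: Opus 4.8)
The plan is to mimic the classical Gaussian elimination algorithm, proceeding by induction on the dimension $n$, but replacing the usual "pick a nonzero pivot" step by the concurrent pivot-search of Lemma~\ref{lem-npivot}, and sequencing these steps with the monadic concurrency operator $\Setomega_n$ introduced in Sect.~\ref{sec-mnd}. First I would reformulate non-singularity constructively: if a matrix $A$ is non-singular, then its first column is a non-null vector, so its entries $A(1,1),\ldots,A(n,1)$ are not all $0$. By Lemma~\ref{lem-npivot} this yields $\Set_n(\bigvee_{k}A(k,1)\apart 0)$, that is, concurrently we obtain an index $k$ together with a witness that the pivot entry $A(k,1)$ is apart from $0$. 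Apartness from $0$ in a constructive field $\creals$ gives, via the field axioms, a multiplicative inverse of that entry in $\creals$, which is exactly what is needed to scale and clear the column.

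The key steps, in order, are as follows. \emph{Step 1 (pivoting).} Use Lemma~\ref{lem-npivot} to concurrently select a row index $k$ with $A(k,1)\apart 0$. \emph{Step 2 (elimination).} Having the pivot and its inverse, construct elementary $\creals$ matrices (row swap to bring the pivot into position $(1,1)$, scaling, and subtraction of multiples of the first row from the others) so that left-multiplying $A$ by their product $G$ yields a matrix whose first column is the unit vector $e_1$; since elementary matrices are invertible $\creals$ matrices, $G$ is an invertible $\creals$ matrix. \emph{Step 3 (reduction to a submatrix).} Argue that the lower-right $(n{-}1)\times(n{-}1)$ submatrix $A'$ of $G\ast A$ is again non-singular --- this follows because $G\ast A$ is non-singular (invertible times non-singular) and its first column is $e_1$, so linear independence of the remaining columns forces linear independence of the columns of $A'$. \emph{Step 4 (recursion).} Apply the induction hypothesis to $A'$ to obtain an invertible $\creals$ matrix inverting $A'$, embed it as an $n\times n$ $\creals$ matrix $H$ acting trivially on the first coordinate, and assemble $B \eqdef H \ast G$, so that $B\ast A = E$. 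The base case $n=1$ is immediate: a non-singular $1\times 1$ matrix is a nonzero entry, hence apart from $0$, hence invertible in $\creals$.

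The delicate part is not the linear algebra but the \emph{bookkeeping of concurrency across the recursion}. Each of the $n$ elimination stages performs its own concurrent pivot search, and the choice made at stage $i$ determines the matrix whose column is searched at stage $i{+}1$; this is precisely the "concurrency in sequence" that plain $\Set_n$ cannot express. I would therefore phrase the whole algorithm inside $\Setomega_n$, using (Conc-return-n*) to inject the deterministic steps and (Conc-bind) of Lemma~\ref{lem-setstar} to chain the pivot search of one stage into the remaining computation, so that the number of concurrently running threads stays bounded by $n$ rather than blowing up multiplicatively. Concretely, the statement proved is the $\Setomega_n$-guarded existence of a left inverse, and the final invertibility claim is recovered because a realizer of $\Setomega_n(A)$ (by Lemma~\ref{lem-snrr}) always terminates in a genuine realizer of the underlying existential. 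The main obstacle, then, is arranging the induction so that the inductive hypothesis is itself an $\Setomega_n$-statement and its result can be bound, via (Conc-bind), to the pivot choice of the current stage without ever exceeding the $n$-thread bound; everything else is routine constructive linear algebra over the field $\creals$, whose inversion operation is supplied by Lemma~\ref{lem-cfield} together with the apartness witness from Lemma~\ref{lem-npivot}.
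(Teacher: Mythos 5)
Your decomposition (induction on the dimension, peeling off a row and column) is genuinely different from the paper's, and as written it contains a real error in Step~4. Your $G$ clears only the first \emph{column}: left-multiplying by row operations makes column~$1$ equal to $e_1$ but leaves the first-row entries $(G\ast A)(1,j)$, $j\ge 2$, untouched. If $H$ embeds the inverse of the lower-right block $A'$ and acts trivially on the first coordinate, then $H\ast G\ast A$ agrees with $E$ everywhere \emph{except} in its first row, which still carries the residue $(G\ast A)(1,2),\ldots,(G\ast A)(1,n)$. So $B\eqdef H\ast G$ does \emph{not} satisfy $B\ast A=E$ unless that residue happens to vanish. The repair is cheap: after the recursion, clear the first row using the now-identity lower block, i.e.\ set $B\eqdef K\ast H\ast G$ where $K$ is the elementary $\creals$ matrix with $K(k,k)=1$, $K(1,j)=-(G\ast A)(1,j)$ for $j\ge 2$, and $0$ elsewhere; then $B\ast A=E$. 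It is worth seeing why the paper never meets this problem: it keeps the matrix $n\times n$ throughout, does induction on a column index $i$ with invariant $A\eqc{i}E$, processes columns from right to left, and at stage $i$ clears the \emph{entire} column $i$ (above and below the diagonal) with its matrix $C_3$. That full-column clearing is legitimate exactly because the already-finished columns $l>i$ are identity columns, so row $i$ vanishes there and the row operations cannot disturb them; in your left-to-right order the symmetric move is unavailable at stage~$1$, which is precisely how the residue arises.

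There is also a formal wrinkle in your concurrency bookkeeping. Because your induction shrinks the dimension, the induction hypothesis yields $\Setomega_{n-1}(\cdots)$ while the current pivot search yields $\Set_n(\cdots)$; chaining them via (Conc-bind) or (Conc-weak-bind) from Lemma~\ref{lem-setstar} requires a monotonicity principle $\Setomega_{n-1}(A)\to\Setomega_n(A)$ (equivalently $\Set_{n-1}(A)\to\Set_n(A)$) in the thread bound. This is realizable --- any realizer of $\Set_{n-1}(A)$ is literally a realizer of $\Set_n(A)$ --- but it is not among the rules stated in the paper, so you must either add it (realized by the identity) or fix the thread bound at the outer dimension once and for all, which is what the paper's formulation $\forall i\,\forall A\,(A\eqc{i}E\to\Setomega_n(\exists B\,B\ast A=E))$ with $n$ fixed achieves (its stage-$i$ pivot search runs over only the entries $A(k,i)$, $k\le i$, so a mild padding or inclusion is all it needs). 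Minor points: the paper uses (Conc-weak-bind) directly, since the pivot step produces a plain $\Set_n$; your plan to route through (Conc-bind) also works after injecting $\Set_n$ into $\Setomega_n$, which is derivable from (Conc-mp-n) and the closure of $\Setomega_n$. Your Steps~1--3 (concurrent pivoting via Lemma~\ref{lem-npivot}, elementary matrices in $\creals$, non-singularity of the submatrix as a purely classical/Harrop argument) are sound, and your overall concurrency strategy is the intended one; the two defects above are what separate the proposal from a correct proof.
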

\begin{proof}
For matrices $A$, $B$ and $i\in\{0,\ldots,n\}$ we set
\[ A \eqc{i} B  \eqdef \forall k,l\,(l > i \to A(k,l) = B(k,l)) \]
($A$ and $B$ coincide on the columns $i+1,\ldots,n$). 
Hence $A \eqc{0} B$ means $A=B$ whereas $A\eqc{n} B$ always holds.
To prove the theorem, it suffices to show:
\paragraph{Claim.}
For all $i\in\{0,\ldots,n\}$, if $A$ is a non-singular $\creals$ matrix 
such that $A \eqc{i} E$, then $A$ is invertible.
\bigskip

We prove the Claim by induction on $i$.

\emph{In the proof, we need to bear in mind that, since we want to 
use concurrent pivoting 
(Lemma~\ref{lem-npivot}), and we iterate the argument, 
we can only hope to prove invertibility of $A$
$\omega$-concurrently, that is, we prove in fact
\[ \forall i\in\{0,\ldots,n\}\,\forall\,A\in \nsfmat\,
   (A \eqc{i} E \to \Setomega_n(\exists B\in \fmat\, B \ast A=E))\]
where $A \in \mnsfmat$ means that $A$ is an $n$-dimensional
(non-singular) $F$ matrix.} 
%

The base case, $i=0$, is easy, since the hypothesis $A \eqc{0} E$ means $A=E$.
Therefore, we can take $B \eqdef E$.

For the step, assume $i>0$ and let
$A\in \nsfmat$
such that $A \eqc{i} E$.
It suffices to show that there is
$C\in \nsfmat$
such that 
$C \ast A \eqc{i-1} E$. Since then
$C \ast A\in \nsfmat$,
and therefore, by induction hypothesis, we find
$B\in \fmat$,
with $B \ast (C \ast A) = E$. Hence $(B \ast C) \ast A = E$,
by the associativity of matrix multiplication.

\emph{Thanks to the rule (Conc-weak-bind), it suffices to find the matrix
$C$ concurrently, that is, it suffices to prove 
$\Set_n(\exists\, C \in F_{ns}^{n\times n}\,C\ast A \eqc{i-1} E)$, 
since the induction hypothesis implies
\[(\exists\,C\in F_{ns}^{n\times n}\,C\ast A \eqc{i-1} E) \to  
                \Setomega_n(\exists B\in F_{ns}^{n\times n}\, B \ast A=E).\]
}
Since $A$ is non-singular and $A \eqc{i} E$, it is not the case that 
$A(k,i)=0$ for all $k\le i$. Because otherwise the $n$-tuple 
$(\alpha_1,\ldots,\alpha_n)$ with $\alpha_l = 0$ for $l< i$,
$\alpha_i=1$ and $\alpha_l=-A(l,i)$ for $l>i$ would linearly combine the columns
of $A$ to the zero-vector, that is $\sum_{l=1}^{n} \alpha_l\cdot A(k,l)=0$
for all $k$.



        

%
Therefore, by Lemma~\ref{lem-npivot}, we find, concurrently,
$k_0\le i$ such that $A(k_0,i) \apart 0$.
Hence, there exists $\alpha \in F$ such that $A(k_0,i)\cdot \alpha = 1$. 
Define the matrix $C_1$ by
\[C_1(k,l) \eqdef \left\{
    \begin{array}{ll}
       \alpha &\hbox{if $k=l = k_0$}\\
       1      &\hbox{if $k=l \neq k_0$}\\
       0      &\hbox{otherwise}
    \end{array}
     \right.
\]
and set $A_1 \eqdef C_1 \ast A$ (multiplying the $k_0$-th row by $\alpha$). 
Clearly, $A_1 \eqc{i} E$ and $A_1(k_0,i) = 1$. 
Further, define
\[C_2(k,l) \eqdef \left\{
    \begin{array}{ll}
       1 &\hbox{if $k=l\not\in\{k_0,i\}$ or $(k,l)\in\{(k_0,i),(i,k_0)\}$}\\
       0 &\hbox{otherwise}
    \end{array}
     \right.
\]
and set $A_2 \eqdef C_2 \ast A_1$ (swapping rows $k_0$ and $i$). 
Clearly, $A_2 \eqc{i} E$ and $A_2(i,i) = 1$. 
Finally, define
\[C_3(k,l) \eqdef \left\{
    \begin{array}{ll}
       -A_2(k,i) &\hbox{if $k\neq i$ and $l=i$}\\ 
       1         &\hbox{if $k=l$}\\
       0         &\hbox{otherwise}
    \end{array}
     \right.
\]
and set $A_3 \eqdef C_3 \ast A_2$ (subtracting
the $A_2(k,i)$ multiple of row $i$ from row $k$, for each $k\neq i$). 
Clearly, $A_3 \eqc{i-1} E$, 
%
and $C \eqdef C_3 \ast C_2 \ast C_1$ is a non-singular $\creals$ matrix
since $C_1,C_2,C_3$ are. By associativity of matrix multiplication we
have $C \ast A = A_3 \eqc{i-1} E$.
This completes the proof of the claim and hence the theorem.
\end{proof}
The above proof is written in such a detail that its formalization
is straightforward and a program can be extracted. We did this
(by hand since $\CFP$ has yet to be implemented), with the
lazy functional programming language Haskell as target language.
In~\citep{BergerTsuikiCFP} it is shown how $\Amb$ can be interpreted
as concurrency using the Haskell library \verb|Control.Concurrent|.

Since Thm.~\ref{thm-gauss} is stated w.r.t.\ an arbitrary constructive field $F$,
the extracted program is polymorphic in (realizers of) $F$.
We tested the extracted matrix inversion program with respect to the
constructive fields 
$\QQ$ (rational numbers (\ref{eq-rat})), 
$\A$ (fast rational Cauchy sequences implemented as functions~(\ref{eq-approx})), 
and $\Aco$ (fast rational Cauchy sequences implemented as 
streams~(\ref{eq-approxco})),
running it with matrices some of whose
entries are nonzero but very close to zero and hard to compute~\cite{githubUB}.

The result was that $\QQ$ is hopeless since exact rational numbers become huge, 
in terms of the size of their representation as fractions, and thus 
computationally unmanageable.
$\A$ and $\Aco$ performed similarly with $\Aco$ being slightly faster 
since the stream
representation can exploit memoization effects and Haskell's laziness.

We also compared the extracted programs with a variant where concurrency
is explicitly modelled through interleaving Cauchy sequences or digit streams.
Here, one notices the effect that the slowest Cauchy sequence
(the one whose computation consumes most time) determines the overall run time, 
instead of the fastest as it is the case with true concurrency.


\section{Conclusion}
\label{sec-conclusion}

This paper aimed to demonstrate the potential of constructive mathematics
regarding program extraction, not only as a possibility in principle,
but as a viable method that is able to capture 
computational paradigms that are crucial in current programming
practice.
%
%
We used $\CFP$, an extension of a logic for program extraction
by primitives for partiality and concurrency, to extract a concurrent
program for matrix inversion based on Gaussian elimination.
Although the proof of the correctness of the logic is quite involved,
the extension is very simple and consists of just two new logical operators,
restriction, a strengthening of implication that is able to control partiality,
and an operator that permits computation with a fixed number of concurrent
threads. The case study on Gaussian elimination showed that proofs in
this system are very close to usual mathematical practice.

The system $\CFP$ uses the axiom $\BT$, a generalization of Brouwer's Thesis,
which can also be viewed as an abstract form of bar induction.
$\BT$ was used to turn the Archimedean axiom for real numbers into
an induction schema which in turn allowed us to prove that nonzero elements
of a constructive field are apart from zero. The usual proof of the
latter result requires Markov's principle and the axiom of
countable choice.
Moreover, the introduction rule for the concurrency operator is a form
of the law of excluded middle.
For these reasons, $\CFP$ has to be regarded as a semi-constructive system,
which begs the question to which extent our results can be regarded
constructively valid.  Classical logic could be partially avoided by
negative translation, however, its use has computational significance
since it allows us to undertake searches without having a constructive
bound for the search. Replacing these unbounded searches by algorithms that
use constructively obtained (worst-case) bounds may result in a dramatic loss of
efficiency since the computation of these bounds may be expensive and useless
since the search usually stops far earlier than predicted by the bound 
(see~\citep{Pattinson21} for examples).
It is an interesting problem to find a formal system that is fully
constructive but, at the same time, allows for efficient program extraction as 
in our semi-constructive system.


\section*{Acknowledgements}
  This work has received funding from the European Union's Horizon 2020 
  research and innovation programme under the Marie Sklodowska-Curie 
  grant agreement No 731143 (CID).
  We also acknowledge support by the JSPS Core-to-Core Program (A. Advanced
  Research Networks) and JSPS KAKENHI grant number 15K00015. 


\bibliographystyle{ws-rv-van}
\bibliography{refs}


\end{document}